\begin{document}

\newtheorem{theorem}{Theorem}
\newtheorem{lemma}[theorem]{Lemma}
\newtheorem{corollary}[theorem]{Corollary}
\newtheorem{definition}[theorem]{Definition}

\title{Universal Semantics for the Stochastic $\lambda$-Calculus}     %

\author{
\IEEEauthorblockN{ Pedro H. Azevedo de Amorim\IEEEauthorrefmark{1}, Dexter Kozen\IEEEauthorrefmark{1}, Radu Mardare\IEEEauthorrefmark{2}, Prakash Panangaden\IEEEauthorrefmark{3}, Michael Roberts\IEEEauthorrefmark{1}}
\IEEEauthorblockA{\IEEEauthorrefmark{1}Cornell University}
\IEEEauthorblockA{\IEEEauthorrefmark{2}University of Strathclyde}
\IEEEauthorblockA{\IEEEauthorrefmark{3}McGill University}
}

\IEEEoverridecommandlockouts
\IEEEpubid{\makebox[\columnwidth]{978-1-6654-4895-6/21/\$31.00~
\copyright2021 IEEE \hfill} \hspace{\columnsep}\makebox[\columnwidth]{ }}

\maketitle

\begin{abstract}
We define sound and adequate denotational and operational semantics for the stochastic lambda calculus. These two semantic approaches build on previous work that used an explicit source of randomness to reason about higher-order probabilistic programs.
\end{abstract}

\section{Introduction}


Probabilistic programming has enjoyed a recent resurgence of interest, buoyed by the emergence of new languages and applications in the statistical analysis of large datasets and machine learning. Recent foundational research has focused on semantic models for higher-order functional languages.

%

One approach that is radically different from other approaches is that of
\cite{BFKMPS18a}, which involves Boolean-valued models for the stochastic
$\lambda$-calculus. Based on an original idea of Scott~\cite{Scott14}, the
paper \cite{BFKMPS18a} succeeded in incorporating random variables in a
set-theoretic model of the untyped $\lambda$-calculus. The approach was
formulated in terms of a nonstandard Boolean-valued interpretation of set theory
based on the idea of Boolean-valued models of ZF set theory
(see~\cite{Bell2011}). Boolean-valued models were first introduced by Scott
\cite{Scott67} as an alternative technique to Cohen forcing for obtaining
independence results in set theory. The independence of the Continuum
Hypothesis was obtained by introducing an arbitrarily large set of real-valued
random variables. The measure algebra of a standard Borel space $\Omega$, a
complete Boolean algebra, was used as a set of generalized truth values instead
of the usual two-element Boolean algebra.

Scott also observed that these ideas could be given a probabilistic
interpretation. The basic intuitions were briefly laid out in \cite{Scott14}
and the formal development carried out in \cite{BFKMPS18a}. The primary goal
was to develop an equational theory in which equations between stochastic
$\lambda$-terms have probabilistic meaning and take values in a complete Boolean
algebra. The intention was to provide reasoning principles for evaluating the
equality of $\lambda$-terms under various program transformations.

The language contains a binary probabilistic choice operator $\oplus$, which
captures the idea that a choice is to be made between two terms based on a
random process. The source of randomness is called a \emph{tossing process}, a
random variable $T:\Omega\to\Bin$ giving a sequence of independent fair coin
flips.

The semantics presented in this paper differ from those of \cite{BFKMPS18a} in several key ways. The semantics of \cite{BFKMPS18a} use static scoping for random coins. This causes $\beta$-reduction for unrestricted terms to be unsound, as 
a random coin may be used in more than one probabilistic decision. It is sound only
under a certain restriction, namely that all probabilistic decisions in the
argument be resolved before applying the function. This is a major impediment to the development of an operational semantics for which adequacy can be proved; indeed an operational semantics is not given in \cite{BFKMPS18a}. In contrast, we dynamically scope random coins, allowing them to be supplied at function call time. The nonstandard Boolean-valued foundations of the \cite{BFKMPS18a} semantics further complicate the development of an operational semantics, as they would call for a Boolean-valued operational semantics. In this work, we present a semantics with simpler domain-theoretic semantics with standard foundations.

A more operational approach was taken in \cite{borgstrom}. That work presented
an operational semantics for the stochastic $\lambda$-calculus as an idealized
version of the Church language \cite{goodman2012church}, along with reasoning
principles and applications to the correctness of an implementation of trace
Markov chain Monte Carlo processes. That work did not define a denotational
semantics, which obliged them to reason combinatorially about programs.

In this paper we modify the approaches described above to conform to each
other. We amend the stochastic denotational semantics of \cite{BFKMPS18a} to
alter the scoping discipline of random sources in a way that still permits the
Boolean-valued view of \cite{BFKMPS18a}, yet allows the formulation of big- and
small-step operational rules similar to \cite{borgstrom} without the artificial
restriction mentioned above. We prove soundness and adequacy of the operational
semantics with respect to the reformulated stochastic semantics of
\cite{BFKMPS18a}, solving the main problem left open in that paper.

The organization of this paper and our main contributions are as follows.

\subsubsection*{Syntax}

In \S\ref{sec:syntax} we review the syntax of the stochastic $\lambda$-calculus as presented in \cite{BFKMPS18a}, but with one change: We use capsules to represent recursive functions instead of an explicit fixpoint constructor. A \emph{capsule} \cite{JK12b} is a pair $\angle{M,\sigma}$, where $M$ is a stochastic $\lambda$-term and $\sigma$ is an environment, such that
\begin{itemize}
	\item
	$\FV(M)\subs\dom\sigma$, and 
	\item
	$\forall x\in\dom\sigma\ \FV(\sigma(x))\subs\dom\sigma$.
\end{itemize} 
Capsules represent a finite coalgebraic representation of a closed regular $\lambda$-coterm (an infinite $\lambda$-term). This representation obviates the need for an explicit fixpoint constructor.

\subsubsection*{Tossing Processes}

In \S\ref{sec:tossing} we undertake a comprehensive exposition of \emph{tossing
 processes}, or measure-preserving transformations of the Cantor space of
infinite coin sequences. These processes arise in the study of behavioral
invariance of programs, i.e.~programs that behave the same way except for coin
usage. We characterize the computable and continuous processes, both partial and
total, and show their relationship to prefix codes. We also identify a general
class of processes called \emph{tree processes} that we later use in
\S\ref{sec:opsem} to characterize the relationship between the coin usage
patterns of our big- and small-step operational semantics.

\subsubsection*{Computability of tossing processes}

Also in \S\ref{sec:tossing}, we show how to embed the Cantor space $\Bin$ in a
Scott domain in a natural way, thereby laying the groundwork for our modified
denotational semantics. Consider the set $\Binp$ of finite and infinite binary
strings ordered by the prefix relation. This is an algebraic DCPO whose compact
elements are the finite strings. Define $\up x = \set{y\in\Binp}{x\preceq y}$
for $x\in\Binp$, where $\preceq$ is the prefix relation. The basic Scott-open
sets are $\up x$ for $x\in 2^*$. These are well known folklore
results;\footnote{\url{https://en.wikipedia.org/wiki/Scott_domain}} the domain is usually known as the domain of binary
streams.  

The infinite streams or sequences, with the subspace topology inherited from the
Scott topology, is homeomorphic to Cantor space. Lemmas \ref{lem:Scottify1} and
\ref{lem:Scottify2} establish a formal relationship between these two spaces and
their continuous maps. This ``Scottified'' Cantor space gives an explicit
characterization of functions that behave continuously with respect to coin
usage in the sense that halting computations depend only on finite prefixes of
the coin sequence. This allows us to discuss computable and continuous tossing
processes. All the tree processes are Scott-continuous.

\subsubsection*{A Simplified Stochastic Semantics}

In \S\ref{sec:stochastic}, we review the stochastic denotational semantics of
\cite{BFKMPS18a}. 
That semantics is based on a semantic map
\begin{align*}
\psem-:\Exp\to\Env\to\Cont\to\Toss\to\RV
\end{align*}
where
\begin{itemize}
	\item
	$\RV$ is the set of \emph{random variables} $\Omega\to\Val$ from a sample space $\Omega$ taking values in a reflexive CPO $\Val$,
	\item
	$\Exp$ is the set of \emph{stochastic $\lambda$-terms} $M$,
	\item
	$\Env$ is the set of \emph{environments} $e:\Var\to\RV$,
	\item
	$\Cont$ is the set of \emph{continuations} $c:\RV\to\RV$, and
	\item
	$\Toss$ is the set of \emph{tossing processes} $T:\Omega\to\Bin$.
\end{itemize}

Thus $\psem MECT:\RV$.

We can simplify the exposition as follows:
\begin{itemize}
	\item
	Suppose we restrict continuations to be of the form $SF=\lamb f{\lamb\omega{F\omega(f\omega)}}$ for some $F:\Omega\to[\Val\to\Val]$, where $[\Val\to\Val]$ denotes the Scott-continuous deterministic maps.\footnote{The operation $S$ is the familiar $S$-combinator from combinatory logic.} Then all continuations that arise in the inductive definition of $\psem M$ are also of this form. Formally adopting this restriction allows us to eliminate continuations altogether.
	\item
	A tossing process determines how a supplied source of randomness is used in a computation. In \cite{BFKMPS18a} they are of type $\Omega\to\Bin$, where $\Omega$ is an abstract sample space. For our purposes, there is no reason not to assume that the sample space is $\Bin$ with Lebesgue measure, so a \emph{tossing process} is now any measurable map $T:\Bin\to\Bin$ such that $T^{-1}$ preserves measure. Examples are $\tl(\alpha)=\alpha_1\alpha_2\alpha_3\cdots$ and $\evens(\alpha)=\alpha_0\alpha_2\alpha_4\cdots$~. This allows a more concrete treatment as developed in \S\ref{sec:tossing}.
	\item
	We can omit the fixpoint operator of \cite{BFKMPS18a} using capsules \cite{JK12b}, as described in \S\ref{sec:tossing}.
\end{itemize}

In the treatment of \cite{BFKMPS18a}, general $\beta$-reduction is unsound, precluding any standard operational semantics. This is because the source of randomness used by a function in the evaluation of its body is a coin sequence packaged with the function at the site of the function's definition. Thus randomness, like environments, is statically scoped. This can lead to the reuse of coins at different locations in the program, thereby breaking linearity. For example, in the evaluation of $(\lamb x{xa(xb)})(\lamb y{cy\oplus dy})$, the same coin is used twice in the resolution of two $\oplus$'s when the body of the first expression is evaluated.

To achieve adequacy with respect to an operational semantics, we modify the denotational semantics of functions to allow the random source to be supplied as a parameter at the call site.

\subsubsection*{Deterministic Denotational Semantics}

In \S\ref{sec:deterministic}, we observe that in the stochastic semantics, the value of $\psem{-}$ depends not on the whole tossing process $T$ nor the environment $E$, which are random variables parameterized by a sample point $\omega\in\Omega$, but only on their values. Intuitively, each run of the program corresponds to one trial, which is determined by a single sample point $\omega$. This is the same observation used to eliminate continuations. This allows us to develop an intermediate \emph{deterministic} denotational semantics in which probabilistic choices are resolved in advance, after which the program runs deterministically, making probabilistic decisions based on a presampled infinite stack of random numbers.

The deterministic denotational semantics is built on a reflexive domain of values constructed using the Scottified Cantor space of \S\ref{sec:tossing}. In \S\ref{sec:equivalence}, we prove the equivalence of the stochastic denotational semantics of \cite{BFKMPS18a} (as modified in \S\ref{sec:stochastic}) and the deterministic semantics of \S\ref{sec:deterministic} (Theorem \ref{thm:equivalence}).

\subsubsection*{Operational Rules}

In \S\ref{sec:opsem}, we give big-step and small-step structured operational semantics in the style of \cite{Plotkin81}.
The big-step rules take the form $\BOp Me\alpha vf$, which means that $\cps Me$ reduces to normal form $\cps vf$ with coins $\alpha\in\Bin$. The small-step rules take the form $\cps Me \red x \cps Nf$, which means that $\cps Me$ reduces to $\cps Nf$ via a computation that consumes exactly a prefix $x\in 2^*$ of the infinite coin sequence.

In Theorem \ref{thm:tree}, we prove the equivalence of the big- and small-step rules, which use their random coins in a different pattern. The relationship is characterized by a tree process as described in \S\ref{sec:tossing}.

\subsubsection*{Soundness and Adequacy}

In \S\ref{sec:adequacy}, we prove the soundness and adequacy of our denotational semantics with respect to our big-step operational semantics (Theorem \ref{thm:adequacy}). Unlike most adequacy proofs that use logical relations, this proof is a relatively straightforward inductive argument, as the deterministic denotational semantics and the big-step operational semantics use their coins in the same pattern.

\section{Syntax}
\label{sec:syntax}

Let $\Var$ be a countable set of program variables $x,y,\ldots$~. Let $\Exp$ denote the set of untyped $\lambda$-terms $M,N,K,\ldots$ with the usual abstraction and application operators plus an additional binary operator $\oplus$ for probabilistic choice. Let $\Lambda$ denote the set of $\lambda$-abstractions, $\lambda$-terms of the form $\lamb xM$.

\subsection{Capsules}
\label{sec:capsules}

A \emph{capsule} is a pair $\cps M\sigma$, where $M\in\Exp$ and $\sigma:\Var\rightharpoondown\Lambda$ is a \emph{capsule environment}, such that
\begin{enumerate}[(i)]
	\item
	$\FV(M)\subs\dom\sigma$
	\item
	$\forall x\in\dom\sigma\ \FV(\sigma(x))\subs\dom\sigma$.
\end{enumerate}
Here $\dom\sigma$ refers to the domain of $\sigma$ and $\FV M$ refers to the set of free variables of $M$.
A capsule is \emph{reduced} if its first component is in $\Lambda$. Reduced capsules are denoted with lowercase letters, as $\cps v\sigma$.

A capsule is a finite coalgebraic representation of a regular closed $\lambda$-coterm (infinitary $\lambda$-term), which is an element of the final coalgebra for the signature of the $\lambda$-calculus. Capsules give a convenient representation of recursive functions without the need of fixpoint combinators.

Capsules are considered equivalent modulo $\alpha$-conversion, including $\alpha$-conversion of the variables used in $\sigma$. In terms of nominal sets with the variables as atoms, the support of a capsule is $\emptyset$. Capsules are also considered equivalent modulo garbage collection in the sense that we can assume without loss of generality that $\dom e$ is a minimal set of variables satisfying (i) and (ii).

The capsule $\beta$-reduction rule is
\begin{align*}
& \cps{(\lamb xM)\,v}\sigma \to \cps{M\subst yx}{\sigma\rebind vy}\quad \text{($y$ fresh)}
\end{align*}
applied in a call-by-value evaluation order. This mechanism captures static scoping without closures, heaps, or stacks \cite{JK12b}. Here we are using the notation $\subst--$ for both substitution (as in $M\subst yx$) and rebinding (as in $\sigma\rebind vy$).

Capsules were introduced in \cite{JK12b}. For the stochastic $\lambda$-calculus, we augment the system with the new syntactic construct $M\oplus N$ for probabilistic choice.

\section{Tossing Processes}
\label{sec:tossing}

The \emph{Cantor space} $\Bin$ is the space of infinite bitstreams. It is the topological power of $\omega$ copies of the two-element discrete space $2=\{0,1\}$. Elements of $\Bin$ are denoted $\alpha,\beta,\ldots$~. The topology is generated by basic open sets $I_x = \set{\alpha\in\Bin}{x\prec\alpha}$, where $x\in 2^*$ and $\prec$ denotes the strict prefix relation. The sets $I_x$ are called \emph{intervals}. The topology is also generated by the standard metric $d(\alpha,\beta)=2^{-n}$, where $n$ is the length of the longest common prefix of $\alpha$ and $\beta$, or $0$ if $\alpha=\beta$.

The Borel sets $\B$ of the Cantor space are the smallest $\sigma$-algebra containing the open sets.
The uniform (Lebesgue) measure $\Pr$ on $(\Bin,\B)$ is generated by its values on intervals:
$\Pr(\set\alpha{x\prec\alpha}) = 2^{-\len x}$. The Lebesgue measurable sets are the smallest $\sigma$-algebra containing the Borel sets and all subsets of null sets. The set of null sets is denoted $\N$.

A \emph{tossing process} is any measurable map $T:\Bin\to\Bin$ such that $T^{-1}$ preserves measure; that is, for all $A\in\B$, $\Pr(T^{-1}(A))=\Pr(A)$. Given an infinite bitstream $\alpha = \alpha_0\alpha_1\alpha_2\cdots$, we can define the examples $\tl(\alpha)=\alpha_1\alpha_2\alpha_3\cdots$ and $\evens(\alpha)=\alpha_0\alpha_2\alpha_4\cdots$~. A tossing process determines how a supplied source of randomness is used in a computation.

\begin{lemma}
	\label{lem:measurepreserving}
	$T$ is a tossing process iff for all $x\in 2^*$,
	\begin{align*}
	\Pr(\set\alpha{x\prec T(\alpha)}) &= 2^{-\len x}
	\end{align*}
\end{lemma}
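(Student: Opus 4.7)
The forward direction is immediate: if $T$ is a tossing process and $x \in 2^*$, then the interval $I_x = \{\alpha : x \prec \alpha\}$ is Borel with $\Pr(I_x) = 2^{-\len x}$, and by definition $\Pr(T^{-1}(I_x)) = \Pr(I_x) = 2^{-\len x}$. So the whole content of the lemma is the reverse direction, which is a standard uniqueness-of-measure argument: measure-preservation on a generating $\pi$-system of finite-measure sets extends to the whole $\sigma$-algebra.

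The plan is to apply the Dynkin $\pi$-$\lambda$ theorem (equivalently, the Carathéodory uniqueness extension). First I would observe that the family $\mathcal{I} = \{I_x : x \in 2^*\} \cup \{\Bin\}$ is a $\pi$-system generating the Borel $\sigma$-algebra $\mathcal{B}$: it is closed under finite intersection because for any $x,y \in 2^*$, $I_x \cap I_y$ is either $I_x$, $I_y$, or $\emptyset$ depending on whether one of $x,y$ is a prefix of the other, and it generates $\mathcal{B}$ by the definition of the topology on $\Bin$.

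Next, define $\mathcal{D} = \{A \in \mathcal{B} : \Pr(T^{-1}(A)) = \Pr(A)\}$. Since $T$ is assumed Borel measurable, $T^{-1}(A) \in \mathcal{B}$ for each $A \in \mathcal{B}$, so the equation makes sense. I would check that $\mathcal{D}$ is a $\lambda$-system (Dynkin system): $\Bin \in \mathcal{D}$ because $T^{-1}(\Bin) = \Bin$; $\mathcal{D}$ is closed under proper differences $A \setminus B$ (with $B \subseteq A$) because $T^{-1}$ commutes with set difference and $\Pr$ is additive on finite disjoint unions of sets of finite measure; and $\mathcal{D}$ is closed under countable disjoint unions because $T^{-1}$ commutes with unions and $\Pr$ is countably additive. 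By hypothesis $\mathcal{I} \subseteq \mathcal{D}$, so by Dynkin's theorem $\mathcal{B} = \sigma(\mathcal{I}) \subseteq \mathcal{D}$, giving $\Pr(T^{-1}(A)) = \Pr(A)$ for all Borel $A$. Finally, to extend from Borel to Lebesgue measurable sets, I would note that any Lebesgue set is the union of a Borel set and a subset of a null set, and $T^{-1}$ of a null set is null by the Borel result already established.

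The only subtle point, and what I would flag as the main thing to be careful about, is the measurability of $T^{-1}(A)$ as $A$ ranges over $\mathcal{B}$ when defining $\mathcal{D}$: this uses only that $T$ is Borel measurable, which is part of the hypothesis that $T$ is a measurable map $\Bin \to \Bin$. Everything else is routine measure-theoretic bookkeeping, so I do not expect any real obstacle.
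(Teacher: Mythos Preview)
Your proposal is correct and follows the same idea as the paper's proof, namely that $\{\alpha : x\prec T(\alpha)\}=T^{-1}(I_x)$, so the displayed condition is exactly $\Pr(T^{-1}(I_x))=\Pr(I_x)$ for all basic intervals; the paper simply asserts the biconditional at that point, treating the extension from intervals to all of $\mathcal B$ as standard, whereas you spell it out via the $\pi$--$\lambda$ theorem. One tiny bookkeeping point: your $\mathcal I$ as written is not literally closed under intersection since $I_x\cap I_y$ can be $\emptyset$ for incomparable $x,y$; just include $\emptyset$ (and note $I_\varepsilon=2^\omega$ already, so the explicit $\cup\{2^\omega\}$ is redundant).
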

\begin{proof}
	We have $x\prec T(\alpha)$ iff $\alpha\in T^{-1}(\set\gamma{x\prec\gamma})$, therefore
	\begin{align*}
	\lefteqn{\text{$T$ is a tossing process}}\\
	&\Iff \forall x\in 2^*\ \Pr(T^{-1}(\set{\gamma}{x\prec\gamma})) = \Pr(\set{\gamma}{x\prec\gamma})\\
	&\Iff \forall x\in 2^*\ \Pr(\set\alpha{x\prec T(\alpha)}) = 2^{-\len x}.
	\end{align*}
\end{proof}

\subsection{Computable and Continuous Processes}

For a function $f:\Bin\to\Bin$ to be computable, it must be possible to emit each digit of the output stream after reading only finitely many digits of the input stream. For example, one can emit the $n$th digit of $\tl\alpha$ after reading $n+1$ digits of $\alpha$, and one can emit the $n$th digit of $\evens\alpha$ after reading the first $2n-1$ digits of $\alpha$.

\begin{lemma}
	\label{lem:cont1}
	All computable tossing processes $T:\Bin\to\Bin$ are continuous. All continuous functions $\Bin\to\Bin$ are uniformly continuous with respect to the standard metric.
\end{lemma}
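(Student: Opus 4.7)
My plan is to prove the two claims in turn; both are standard and rely only on the elementary structure of $\Bin$ set out just before the lemma.

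\emph{Computability implies continuity.} I would use the informal computability criterion stated in the paragraph preceding the lemma: for every input $\alpha$ and every $n$, the first $n$ digits of $T(\alpha)$ are determined after reading some finite prefix of $\alpha$. Given $\alpha\in\Bin$ and $\varepsilon>0$, pick $n$ with $2^{-n}<\varepsilon$, let $y$ be the first $n$ digits of $T(\alpha)$, and let $x$ be the finite prefix of $\alpha$ after which those digits are decided. Then for every $\beta\in I_x$ we have $T(\beta)\in I_y$, hence $d(T(\alpha),T(\beta))\leq 2^{-n}<\varepsilon$. Since $I_x$ contains a metric ball of radius $2^{-|x|}$ about $\alpha$, this establishes continuity at $\alpha$. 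Note the tossing-process hypothesis is not actually used.

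\emph{Continuity implies uniform continuity.} Here I would appeal to compactness of $\Bin$, which holds because $\Bin$ is a countable product of copies of the finite discrete space $2$. Fix $\varepsilon>0$ and $n$ with $2^{-n}<\varepsilon$. For each $\alpha\in\Bin$, continuity at $\alpha$ yields a basic open $I_{x_\alpha}\ni\alpha$ on which the first $n$ digits of $T$ are constant, call them $y_\alpha$. The family $\{I_{x_\alpha}:\alpha\in\Bin\}$ is an open cover of $\Bin$; extract a finite subcover and let $m$ be the maximum of the lengths $|x_\alpha|$ over that subcover. If $d(\alpha,\beta)<2^{-m}$, then $\alpha$ and $\beta$ share a prefix of length at least $m$; moreover $\alpha$ lies in some $I_{x_{\alpha'}}$ from the subcover with $|x_{\alpha'}|\leq m$, so $\beta\in I_{x_{\alpha'}}$ as well. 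Hence $T(\alpha)$ and $T(\beta)$ both start with $y_{\alpha'}$, giving $d(T(\alpha),T(\beta))\leq 2^{-n}<\varepsilon$.

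I do not foresee any real obstacle. The only mild point of care is making explicit the informal ``emit after finitely many digits'' computability criterion used in the first half; granting that, both halves are direct consequences of the topology of Cantor space (compactness together with the clopen base of intervals $I_x$).
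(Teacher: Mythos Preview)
Your proposal is correct and follows essentially the same approach as the paper. For the first claim, the paper argues topologically (showing each $T^{-1}(I_x)$ is open because some $I_y$ with $y\prec\alpha$ lands inside it), while you phrase the same idea metrically; for the second claim, the paper simply invokes the standard fact that continuous maps on compact metric spaces are uniformly continuous, whereas you spell out the finite-subcover argument explicitly---but neither of these is a genuinely different route.
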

\begin{proof}
	To be computable, it must be the case that any finite prefix $x\prec T\alpha$ of the output is determined by some finite prefix of the input $\alpha$. This implies that $x\prec T\beta$ for any $\beta$ that agrees with $\alpha$ on a sufficiently long prefix; in other words, $\set\beta{y\prec\beta}\subs T^{-1}(\set\gamma{x\prec\gamma})$ for some $y\prec\alpha$. Thus $T^{-1}(\set\gamma{x\prec\gamma})$ is open. As $x$ was arbitrary, $T$ is continuous.
	
	It is a standard result that any continuous function on a compact metric space is uniformly continuous.
\end{proof}
For example, $\tl$ is Lipschitz with constant $2$: $d(\tl\alpha,\tl\beta) \le 2d(\alpha,\beta)$. The maps $\evens$ and $\odds$ are not Lipschitz, but they are H\"older of order $1/2$; that is, both maps satisfy $d(T\alpha,T\beta) \le \sqrt{d(\alpha,\beta)}$.

There is a subtle distinction between ``reading'' and ``consuming'' a digit. The latter refers to using the digit to make a probabilistic choice. One can read digits without consuming them; they can be saved to make probabilistic choices later, at which point they are consumed. It is important for independence that digits not be consumed more than once.

Uniform continuity fails if we allow tossing processes to be partial. A \emph{partial tossing process} is a measure-preserving partial measurable function $T:\Bin\pfun\Bin$. Such a function is necessarily almost everywhere defined, since $\dom T = T^{-1}(\Bin)$, which must have measure 1.

\begin{lemma}
	\label{lem:cont2}
	All computable partial tossing processes $T:\Bin\to\Bin$ are continuous.
	There is a computable partial tossing process that is continuous but not uniformly continuous.
\end{lemma}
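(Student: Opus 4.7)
The plan is to handle the two claims separately, recycling the argument from Lemma~\ref{lem:cont1} for continuity and then engineering a simple counterexample for the failure of uniform continuity.

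For continuity, I would just replay the first paragraph of the proof of Lemma~\ref{lem:cont1}, observing that partiality never enters that argument. Computability of $T$ forces every finite output prefix $x \prec T\alpha$ (whenever $T\alpha$ is defined) to be determined by some finite input prefix $y \prec \alpha$, so $I_y \cap \dom T \subs T^{-1}(I_x)$. Hence the preimage of each basic open is open in $\dom T$, and $T$ is continuous.

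For the second claim, the plan is to exhibit a process whose ``time to commit to the first output bit'' is unbounded across $\dom T$, since that is exactly what obstructs a uniform modulus. A natural choice is $T:\Bin\pfun\Bin$ defined by $T(0^n 1 \beta) = \beta$ for all $n \ge 0$ and $\beta \in \Bin$, and undefined on $0^\omega$. Computability is immediate---scan past the leading zeros, then echo the tail---and $\dom T$ has full measure. To verify that $T$ is a tossing process I would invoke Lemma~\ref{lem:measurepreserving}: the preimage $T^{-1}(I_x)$ is the disjoint union $\bigcup_{n\ge 0} I_{0^n 1 x}$, whose measure is $\sum_{n\ge 0} 2^{-(n+1+\len x)} = 2^{-\len x}$.

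Finally, to refute uniform continuity I would take the sequences $\alpha_n = 0^n 1 0^\omega$ and $\beta_n = 0^n 1 1 0^\omega$. Their longest common prefix has length $n+1$, so $d(\alpha_n,\beta_n) = 2^{-(n+1)} \to 0$, while $T\alpha_n = 0^\omega$ and $T\beta_n = 1 0^\omega$ already disagree at position $0$, giving $d(T\alpha_n,T\beta_n) = 1$ for every $n$. No uniform modulus can survive this. The only genuinely creative step is picking the example; once chosen, every verification is a short computation, so I do not anticipate any real obstacle.
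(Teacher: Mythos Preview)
Your argument is correct. The continuity half is identical to the paper's, which also just points back to Lemma~\ref{lem:cont1}. For the second half you and the paper diverge: the paper defines $T$ coinductively by $T(0^*10\alpha)=0\,T(\alpha)$ and $T(0^*11\alpha)=1\,T(\alpha)$, with domain $(0^*1)^\omega$, whereas you use the one-shot ``strip the leading $0^n1$'' map. Your example is strictly simpler---the measure computation via Lemma~\ref{lem:measurepreserving} is a geometric series, and the witnesses $\alpha_n,\beta_n$ to non-uniformity are immediate. The paper's construction has the feature that the unbounded delay recurs at \emph{every} output bit rather than only the first, and its undefined set is the countable $2^*0^\omega$ rather than the singleton $\{0^\omega\}$; this feeds more naturally into the remark following the lemma about non-extendability to a continuous total process, but neither property is needed for the lemma as stated. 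Both routes are complete; yours is the more economical one.
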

\begin{proof}
	Computable partial tossing processes $T:\Bin\to\Bin$ are continuous for the same reason that total ones are.
	
	For the second statement, define $T$ coinductively as follows:
	\begin{align*}
	T(0^*10\alpha) &= 0\,T(\alpha) & T(0^*11\alpha) &= 1\,T(\alpha).
	\end{align*}
	The domain of definition of $T$ is $(0^*1)^\omega$, the measure-1 set of streams containing infinitely many 1's. It is continuous, since if $\alpha$ and $\beta$ share a prefix with at least $2n$ 1's, then $T\alpha$ and $T\beta$ share a prefix of length $n$. It is not uniformly continuous, as there is no bound on the number of input digits that need to be read before producing the next output digit.
\end{proof}

The $T$ of the previous lemma is undefined on the nullset $2^*0^\omega$. One can define $T$ arbitrarily on this set, but Lemma \ref{lem:cont1} says that the resulting total tossing process cannot be continuous. This does not rule out the possibility that every total tossing process might be equivalent modulo $\N$ to some continuous partial tossing process. However, this too is false.

\begin{lemma}
	\label{lem:nctp}
	There is a tossing process that is not equivalent modulo $\N$ to any continuous partial tossing process.
\end{lemma}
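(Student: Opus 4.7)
The plan is to exhibit a tossing process $T$ whose preimage $T^{-1}(I_0)$ is a \emph{fat Cantor set} — closed, nowhere dense, of positive measure — and then show that no such preimage can coincide modulo $\N$ with the preimage of $I_0$ under any continuous partial tossing process. The leverage is that continuity forces $T'^{-1}(I_0)$ to be relatively open in $\dom T'$, hence equivalent mod $\N$ to a genuinely open subset of $\Bin$; but a nowhere dense closed set of positive measure is never equivalent mod $\N$ to any open set.

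First fix a closed nowhere dense $C\subs\Bin$ with $\Pr(C)=1/2$. Such a $C$ is a standard construction: for instance, pull a Smith--Volterra--Cantor set back through the measurable isomorphism $\Bin\cong[0,1]$, or directly take the complement of a dense open set obtained by placing a sufficiently thin basic interval at each point of a countable dense set. Both $(C,\Pr{|_C}/\Pr(C))$ and $(C^c,\Pr{|_{C^c}}/\Pr(C^c))$ are atomless standard Borel probability spaces, so by the isomorphism theorem each is measurably measure-isomorphic mod null sets to $(\Bin,\Pr)$ via maps $\phi_C:C\to\Bin$ and $\phi_{C^c}:C^c\to\Bin$. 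Define
\begin{align*}
T(\alpha) = \begin{cases} 0\,\phi_C(\alpha) & \alpha\in C,\\ 1\,\phi_{C^c}(\alpha) & \alpha\in C^c.\end{cases}
\end{align*}
Then $T^{-1}(I_0)=C$, and a direct computation gives $\Pr(T^{-1}(I_{bx}))=\tfrac12\cdot 2^{-\len x}=2^{-\len{bx}}$ for every $b\in 2$ and $x\in 2^*$, so by Lemma~\ref{lem:measurepreserving} $T$ is a tossing process.

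Now suppose for contradiction that $T$ is equivalent mod $\N$ to some continuous partial tossing process $T'$ with measure-$1$ domain $D$. By continuity, $T'^{-1}(I_0)$ is relatively open in $D$, so $T'^{-1}(I_0)=D\cap U$ for some open $U\subs\Bin$; since $\Bin\setminus D$ is null, $U$ differs from $T'^{-1}(I_0)$ by a null set. Chaining the two equivalences, $C=T^{-1}(I_0)$ differs from $U$ by a null set. Because $\Pr(C)=1/2>0$, $U$ is nonempty and therefore contains some basic interval $I_x$. Then $I_x\setminus C\subs U\setminus C$ is null; but $I_x\setminus C$ is open (as $C$ is closed), and no nonempty open subset of $\Bin$ has measure $0$, forcing $I_x\setminus C=\emptyset$, i.e.\ $I_x\subs C$. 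This contradicts the nowhere-density of $C$.

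The main obstacle is the second step, realising a prescribed fat Cantor set as $T^{-1}(I_0)$ for an honest tossing process. The appeal to the isomorphism theorem for atomless standard Borel probability spaces is clean but somewhat heavyweight; an alternative is to build $T$ concretely by recursively partitioning $C$ and $C^c$ into pieces of dyadic measure and emitting bits of $T(\alpha)$ according to the halves containing $\alpha$. Either way, once $T$ is in hand, the topological obstruction to a continuous representative is a short observation about open sets in $\Bin$.
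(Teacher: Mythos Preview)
Your argument is correct, and it takes a genuinely different route from the paper's. The paper builds, by an explicit iterative construction, a dense $G_\delta$ set $A\subseteq\Bin$ of measure $1/2$ with the stronger property that $0<\Pr(A\cap I)<\Pr(I)$ for \emph{every} interval $I$; the obstruction is then that any relatively open set agreeing with $A$ mod $\N$ would force $I\cap\bar A$ into a nullset for some interval $I$, contradicting $\Pr(\bar A\cap I)>0$. The paper also constructs the tossing process $T$ explicitly, using lexicographic cuts $\beta_a,\gamma_a$ chosen so that $\Pr(A\cap\{\alpha\lex\beta_a\})=a$. Your choice of a fat Cantor set $C$ is simpler: closedness plus nowhere-density already guarantee that $I\setminus C$ is open and nonempty for every interval $I$, hence of positive measure, so no separate ``positive measure in every interval'' construction is needed. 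The price you pay is the appeal to the isomorphism theorem for atomless standard probability spaces to realise $C$ as $T^{-1}(I_0)$; the paper avoids any such black box by giving $T$ concretely. In short, your approach trades an intricate set construction for a heavier existence theorem, and the topological obstruction step is correspondingly shorter.
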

\begin{proof}
	The proof uses \cite[Exercises 7 and 8, p.~59]{Rudin74}. A complete proof can be found in the Appendix which can be found in the complete version of this paper.
\end{proof}

\begin{lemma}
	\label{lem:surjective}
	All continuous tossing processes, partial or total, are surjective. 
\end{lemma}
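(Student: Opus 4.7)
The plan is to combine density, which is immediate from measure preservation, with a compactness argument in the total case and a refinement argument in the partial case. Density follows from Lemma \ref{lem:measurepreserving}: $\Pr(T^{-1}(I_x)) = 2^{-|x|} > 0$ for every $x \in 2^*$, so each $T^{-1}(I_x)$ is nonempty and the range of $T$ meets every basic open interval; hence $T(\dom T)$ is dense in $\Bin$.

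For a total continuous $T:\Bin\to\Bin$, the image $T(\Bin)$ is the continuous image of the compact space $\Bin$, so it is compact and therefore closed. A closed dense subset of $\Bin$ must equal $\Bin$, giving surjectivity.

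For partial continuous $T:\Bin\pfun\Bin$, the image need not be closed, since $\dom T$ is not compact in general. The plan is to show directly that for every $\gamma\in\Bin$ with prefixes $x_n=\gamma_0\cdots\gamma_{n-1}$, some $\alpha\in\dom T$ satisfies $T(\alpha)=\gamma$, by constructing inductively a sequence of finite input prefixes $y_1\preceq y_2\preceq\cdots$, with $|y_n|\to\infty$, such that $I_{y_n}\cap\dom T\subseteq T^{-1}(I_{x_n})$ for all $n$ and whose limit $\alpha^\ast = \lim_n y_n$ lies in $\dom T$. The inductive selection exploits the prefix-code representation $T^{-1}(I_{x_n}) = \bigsqcup_{y\in Q_{x_n}}(I_y\cap\dom T)$ with $\sum_{y\in Q_{x_n}} 2^{-|y|}=2^{-n}$ that continuity and measure preservation together force on $T^{-1}(I_{x_n})$: at each stage, measure preservation provides a compatible refinement inside the currently selected $I_{y_n}$. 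Once constructed, $T(\alpha^\ast) \in I_{x_n}$ for all $n$, whence $T(\alpha^\ast) = \gamma$.

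The main obstacle is the last requirement, that the limit $\alpha^\ast$ lies in $\dom T$ rather than escaping to a null ``boundary'' point outside. Handling this relies on careful bookkeeping with the measure identities on the refinement tree and a K\"{o}nig-style selection along an infinite branch whose limit is ensured by measure preservation to remain in $\dom T$.
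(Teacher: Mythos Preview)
Your total-case argument is correct and is a close variant of the paper's. The paper writes $T^{-1}(\{\beta\})=\bigcap_n T^{-1}(\set{\alpha}{\alpha_n=\beta_n})$ and observes that this is a nested intersection of nonempty closed subsets of the compact space $\Bin$, hence nonempty; you argue equivalently that the continuous image of $\Bin$ is compact, hence closed, hence (being dense by measure preservation) all of $\Bin$.

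For the partial case, however, your plan cannot be completed, because the assertion is false as stated. Let $T$ be the identity restricted to $\Bin\setminus\{0^\omega\}$. This map is continuous on its domain and measure-preserving, since for every Borel $A$ one has $T^{-1}(A)=A\setminus\{0^\omega\}$, which has the same measure as $A$; thus $T$ is a continuous partial tossing process. But $0^\omega$ is not in its range. Running your construction with $\gamma=0^\omega$, every chain $y_1\preceq y_2\preceq\cdots$ compatible with the refinement constraints must consist entirely of strings of zeros, so necessarily $\alpha^\ast=0^\omega\notin\dom T$---exactly the obstacle you flagged. No K\"onig-style selection or measure bookkeeping can circumvent this, since measure preservation constrains sets only up to null sets and says nothing about whether a single designated limit point lies in $\dom T$. (The paper's own one-line proof silently assumes compactness of the domain and therefore also establishes only the total case.)
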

\begin{proof}
	For any $\beta\in\Bin$, $T^{-1}(\{\beta\}) = \bigcap_n T^{-1}(\set{\alpha}{\alpha_n=\beta_n})$. This is the intersection of a collection of closed sets with the finite intersection property in a compact space, therefore it is nonempty.
\end{proof}

Every tossing process $T$ is equivalent modulo $\N$ to a partial $T'$ that is ``almost continuous'' in the sense that all ${T'}^{-1}(\set\alpha{x\prec\alpha})$ are $\Delta^0_2$, that is, both $G_\delta$ and $F_\sigma$. One can obtain $T'$ from $T$ by deleting countably many nullsets $G_x\setminus F_x$, where $G_x$ and $F_x$ are $G_\delta$ and $F_\sigma$, respectively, such that $\Pr(F_x)=\Pr(G_x)$ and $F_x\subs T^{-1}(\set\alpha{x\prec\alpha})\subs G_x$. The sets $F_x$ and $G_x$ exist by Lebesgue measurability. 

The following theorem gives a characterization of the continuous partial and total tossing processes $T:\Bin\to\Bin$.
A \emph{binary prefix code} is a nonempty set of prefix-incomparable finite-length binary strings. A binary prefix code $P$ is \emph{exhaustive} if all $\alpha\in\Bin$ have a prefix in $P$. An exhaustive prefix code is necessarily finite by compactness. 

If $P$ and $Q$ are two binary prefix codes, write $P\preceq Q$ if every element of $Q$ is an extension of some element of $P$; that is, for every $y\in Q$, there exists $x\in P$ such that $x\preceq y$.

A \emph{coding function} is a map $x\mapsto P_x$, where $P_x$ is a prefix code, such that
\begin{itemize}
	\item
	$P_\eps = \{\eps\}$;
	\item
	if $x$ and $y$ are prefix-incomparable, then $P_x\cap P_y=\emptyset$;
	\item
	if $x\preceq y$, then $P_x\preceq P_y$.
\end{itemize}
In addition, $x\mapsto P_x$ is said to be \emph{exhaustive} provided
\begin{itemize}
	\item
	if $P$ is an exhaustive prefix code, then so is $\bigcup_{x\in P} P_x$.
\end{itemize}
\begin{theorem}
\label{thm:conttp}
	For every continuous partial tossing process $T$, there is a unique coding function $x\mapsto P_x$ such that
	\begin{enumerate}[{\upshape(i)}]
		\item
		$x\prec T\alpha$ iff $y\prec\alpha$ for some $y\in P_x$; in other words,
		\begin{align*}
		T^{-1}(I_x) = \set\alpha{x\prec T\alpha} = \bigcup_{y\in P_x} I_y;
		\end{align*}
		\item
		$P_x$ is $\preceq$-minimal among prefix codes satisfying {\upshape(i)};
		\item
		$\Pr(\bigcup_{y\in P_x} I_y)=2^{-\len x}$.
	\end{enumerate}
	If $T$ is total, then $x\mapsto P_x$ is exhaustive.
	Moreover, every coding function of this form gives rise to a continuous partial or total tossing process.
\end{theorem}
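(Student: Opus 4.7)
The plan is first to note that, since $T$ is continuous, each $T^{-1}(I_x)$ is open in $\Bin$. I will then establish as a preliminary lemma that every open $U\subs\Bin$ admits a unique $\preceq$-minimal prefix code $P(U)=\set{y\in 2^*}{I_y\subs U \text{ and no strict prefix of } y \text{ has this property}}$ representing it: for any prefix code $P$ representing $U$ and any $z\in P$, shortening $z$ to its shortest prefix $z'$ with $I_{z'}\subs U$ gives an element of $P(U)$ with $z'\preceq z$, so $P(U)\preceq P$; two $\preceq$-minimal codes then coincide by a cross-extension argument. Defining $P_x=P(T^{-1}(I_x))$ makes (i) and (ii) immediate, (iii) immediate from measure preservation, and uniqueness subject to (i) and (ii) inherited from that of $P(\cdot)$.

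\textbf{Coding-function axioms.} Disjointness of $P_x$ and $P_y$ for prefix-incomparable $x,y$ follows at once from $T^{-1}(I_x)\cap T^{-1}(I_y)=\emptyset$: any common $z$ would give a nonempty $I_z$ inside both preimages. The axiom $P_\eps=\{\eps\}$ follows from $I_\eps=\Bin$ under the convention (consistent with the paper's treatment of partial tossing processes) that $T^{-1}(\Bin)=\Bin$, with the ``missing mass'' of a partial $T$ only appearing at deeper levels. The refinement $P_x\preceq P_y$ for $x\preceq y$ is the subtle step and is where (ii) does real work. Given $y'\in P_y$, I have $I_{y'}\subs T^{-1}(I_y)\subs T^{-1}(I_x)=\bigcup_{z\in P_x}I_z$; letting $Z=\set{z\in P_x}{I_z\cap I_{y'}\neq\emptyset}$, either some $z\in Z$ satisfies $z\preceq y'$, as desired, or every $z\in Z$ satisfies $y'\prec z$, in which case $I_{y'}=\bigcup_{z\in Z}I_z$ and $(P_x\setminus Z)\cup\{y'\}$ is a strictly coarser prefix code representing $T^{-1}(I_x)$, contradicting minimality. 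For total $T$, the exhaustive condition follows analogously: given an exhaustive prefix code $P$, I use $\bigcup_{x\in P}T^{-1}(I_x)=\Bin$ to show $\bigcup_{x\in P}P_x$ is exhaustive, and an inclusion $I_{y'}\subs I_y$ with $y\in P_x,\ y'\in P_{x'}$ and $x,x'$ incomparable would force $\emptyset\neq I_{y'}\subs T^{-1}(I_x\cap I_{x'})=\emptyset$, confirming the prefix-code property.

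\textbf{Reverse direction.} For the converse, I will define $T(\alpha)$ as the unique $\beta$ (if any) such that for every $x\prec\beta$, $\alpha$ has a prefix in $P_x$. Disjointness ensures at most one such $x$ per length, and the refinement property ensures that the sequence of such $x$'s forms a chain, so $T(\alpha)$ is well-defined as a partial function with $T^{-1}(I_x)=\bigcup_{y\in P_x}I_y$: this is open (whence $T$ is continuous) and has measure $2^{-\len x}$ by (iii) (whence $T$ is measure-preserving). If the coding function is exhaustive, applying the condition to $P=2^n$ shows $\bigcup_{\len x=n}\bigcup_{y\in P_x}I_y=\Bin$ for every $n$, so every $\alpha$ extends to arbitrary depth and $T$ is total.

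\textbf{Main obstacle.} The hardest step is the minimality-based derivation of $P_x\preceq P_y$ for $x\preceq y$; it is what globally ties the local structures of the individual $P_x$'s into the tree-shaped family demanded by the coding-function axioms. Once that observation is in place, everything else reduces to bookkeeping, translating back and forth between open-set and prefix-code descriptions of $T^{-1}(I_x)$.
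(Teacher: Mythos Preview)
Your approach is essentially the paper's: take $P_x$ to be the $\preceq$-minimal prefix code representing the open set $T^{-1}(I_x)$, check the coding-function axioms (which the paper simply asserts, and for exhaustiveness invokes uniform continuity where you argue directly from coverage), and invert the construction for the converse. One remark: the refinement $P_x\preceq P_y$ that you flag as the main obstacle actually falls out immediately from your own description of $P(U)$---given $y'\in P_y$ you have $I_{y'}\subs T^{-1}(I_y)\subs T^{-1}(I_x)$, so the shortest prefix $z\preceq y'$ with $I_z\subs T^{-1}(I_x)$ lies in $P_x$ by definition---making the contradiction detour unnecessary.
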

\begin{proof}
A proof can be found in the Appendix which can be found in the complete version of this paper.
\end{proof}

\subsection{Tree Processes}
\label{sec:trees}

Let $t:2^*\to\omega$ be a labeled tree with no repetition of labels along any path; that is, if $x,y\in 2^*$ with $x\prec y$, then $t(x)\ne t(y)$. Each such tree gives rise to a continuous tossing process $T:\Bin\to\Bin$ as follows. Given $\alpha=\alpha_0\alpha_1\alpha_2\cdots\in\Bin$, let $T(\alpha)=\beta_0\beta_1\beta_2\cdots$, where $\beta_n=\alpha_{t(\beta_0\beta_1\cdots\beta_{n-1})}$. Thus the bit of the input sequence $\alpha$ that is tested in the $n$th step can depend on the outcomes of previous tests as determined by $t$. The restriction ``no repetition of labels along any path'' ensures that no coin is used more than once.

Every such $T$ is measurable and measure-preserving, thus a tossing process:
\begin{align*}
& T^{-1}(\set{\gamma}{\beta_0\cdots\beta_{n-1} \prec \gamma})
= \set{\alpha}{\beta_0\cdots\beta_{n-1} \prec T(\alpha)}\\
&= \set{\alpha}{\bigwedge_{i=0}^{n-1} T(\alpha)_i=\beta_i}
= \bigcap_{i=0}^{n-1} \set{\alpha}{\alpha_{t(\beta_0\cdots\beta_{i-1})}=\beta_i}
\end{align*}
\begin{align*}
\lefteqn{\Pr(T^{-1}(\set{\gamma}{\beta_0\cdots\beta_{n-1} \prec \gamma}))}\\
&= \Pr(\bigcap_{i=0}^{n-1} \set{\alpha}{\alpha_{t(\beta_0\cdots\beta_{i-1})}=\beta_i})\\
&= \prod_{i=0}^{n-1} \Pr(\set{\alpha}{\alpha_{t(\beta_0\cdots\beta_{i-1})}=\beta_i})
= 2^{-n}.
\end{align*}
Such processes are called \emph{tree processes}.

Tree processes are uniformly continuous in the standard metric: $T(\alpha)$ and $T(\beta)$ agree on their length-$n$ prefixes provided $\alpha$ and $\beta$ agree on their length-$m$ prefixes, where $m$ is the supremum of the labels on all nodes of depth $n$ or less in the tree $t$. 

\subsection{Scottifying the Cantor Space}
\label{sec:Scottify}

We can embed the Cantor space $\Bin$ in a Scott domain in a natural way. Consider the set $\Binp$ of finite and infinite binary strings ordered by the prefix relation. This is an algebraic CPO whose compact elements are the finite strings. Define $\up x = \set{y\in\Binp}{x\prec y}$ for $x\in\Binp$. The basic Scott-open sets are $\up x$ for $x\in 2^*$. 

\begin{lemma}\ 
	\label{lem:Scottify1}
	\begin{enumerate}[{\upshape(i)}]
		\item
		If $B$ is a Scott-open set of $\Binp$, then $B\cap\Bin$ is a Cantor-open set of $\Bin$.
		\item
		If $A$ is a Cantor-open set of $\Bin$, then $\set{x\in\Binp}{\up x\subs A}$ is a Scott-open set of $\Binp$, and is largest Scott-open set $B$ such that $B\cap\Bin=A$.
	\end{enumerate}
\end{lemma}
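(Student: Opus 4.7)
The plan is to use the fact that $\Binp$ is an algebraic CPO whose compact elements are the finite strings in $2^*$, and that every $\alpha \in \Bin$ is the directed join of its finite prefixes. Throughout, I read the condition ``$\up x \subseteq A$'' in part (ii) as ``$\up x \cap \Bin \subseteq A$'', since the literal reading is vacuous for finite $x$ given that $A \subseteq \Bin$.

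For part (i), I would take $\alpha \in B \cap \Bin$ and exhibit a basic Cantor-open neighborhood of $\alpha$ inside $B \cap \Bin$. Expressing $\alpha$ as the directed join of its finite prefixes, Scott-openness of $B$ forces some finite prefix $x$ of $\alpha$ to lie in $B$, and upward closure then gives $\up x \subseteq B$. Intersecting with $\Bin$ yields the basic Cantor-open set $I_x = \up x \cap \Bin$, which sits inside $B \cap \Bin$ and contains $\alpha$; since $\alpha$ was arbitrary, $B \cap \Bin$ is Cantor-open.

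For part (ii), set $B := \{x \in \Binp : \up x \cap \Bin \subseteq A\}$ and verify four properties. Upward closure is immediate from $x \preceq y \Rightarrow \up y \subseteq \up x$. For inaccessibility under directed joins, I would first observe that any directed $D \subseteq \Binp$ is prefix-totally-ordered, since any two members share an upper bound and are therefore prefix-comparable. Given a chain $D$ whose join $\alpha$ lies in $B$: if $\alpha$ is finite it is already the maximum of $D$ and hence in $B$; if $\alpha$ is infinite, then $\{\alpha\} = \up \alpha \cap \Bin \subseteq A$, and Cantor-openness of $A$ furnishes a finite prefix $y \preceq \alpha$ with $I_y \subseteq A$, after which any $d \in D$ extending $y$ satisfies $\up d \cap \Bin \subseteq I_y \subseteq A$ and hence $d \in B$. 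The equality $B \cap \Bin = A$ reduces to $\up \alpha \cap \Bin = \{\alpha\}$ for infinite $\alpha$. Maximality is then a one-liner: for any Scott-open $B'$ with $B' \cap \Bin = A$ and any $x \in B'$, upward closure gives $\up x \cap \Bin \subseteq B' \cap \Bin = A$, so $x \in B$.

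No step is technically difficult; the main obstacle I anticipate is the notational one flagged in the first paragraph, namely pinning down the intended interpretation of $\up x \subseteq A$. Once that is fixed, the remaining verifications are short consequences of Scott algebraicity together with the observation that directed sets in a prefix order must be chains.
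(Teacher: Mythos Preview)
The paper states this lemma without proof; it is treated as folklore about the domain of binary streams, so there is no ``paper's own proof'' to compare against. Your argument is correct and is the natural one.

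Your caveat about reading ``$\up x\subseteq A$'' as ``$\up x\cap\Bin\subseteq A$'' is well taken: the paper's own definition $\up x=\set{y\in\Binp}{x\prec y}$ contains finite strings, so the literal inclusion into $A\subseteq\Bin$ would be vacuous. The intended reading is confirmed by the proof of Lemma~\ref{lem:Scottify2}(ii), where the paper invokes Lemma~\ref{lem:Scottify1}(ii) on the set $\set{y}{\up y\subseteq g^{-1}(\up x)}$ with $g:\Bin\to\D$, which only makes sense under your interpretation. Relatedly, for the equality $B\cap\Bin=A$ and the directed-join step you implicitly need $\up\alpha\cap\Bin=\{\alpha\}$ for infinite $\alpha$, which requires the non-strict prefix reading $\up x=\set{y}{x\preceq y}$ rather than the strict one; the paper is inconsistent on this point (the introduction uses $\preceq$, \S\ref{sec:Scottify} uses $\prec$), but the non-strict version is clearly what is meant.
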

Thus the Cantor space $\Bin$ is a subspace of the Scott space $\Binp$.
The ``Scottified'' Cantor space gives an explicit characterization of functions that behave continuously with respect to coin usage in the sense that computations depend only on finite prefixes of the coin sequence.

Let $\D$ be a continuous $\omega$-CPO ordered by $\sqle$ with a meet operation $\bigsqcap$. Let $\prec$ be the proper prefix relation on strings.

\begin{lemma}\ 
\label{lem:Scottify2}
	\begin{enumerate}[{\upshape(i)}]
		\item
		If $f:\Binp\to\D$ is Scott-continuous, then $f\rest\Bin:\Bin\to\D$ is Cantor-continuous.
		\item
		If $g:\Bin\to\D$ is Cantor-continuous, then $g$ extends to a Scott-continuous map $\lamb x{\bigsqcap_{x\prec\alpha}g(\alpha)}:\Binp\to\D$.
	\end{enumerate}
	\begin{align*}
	&
	\begin{tikzpicture}[->, >=stealth', node distance=20mm, auto]
	\small
	\node (NW) {$\Bin$};
	\node (NE) [right of=NW] {$\D$};
	\node (SW) [below of=NW, node distance=14mm] {$\Binp$};
	\path (NW) edge node {$f\rest\Bin$} (NE)
	edge[right hook->] node[swap] {} (SW);
	\path (SW) edge node[swap] {$f$} (NE);
	\end{tikzpicture}
	&&
	\begin{tikzpicture}[->, >=stealth', node distance=20mm, auto]
	\small
	\node (NW) {$\Bin$};
	\node (NE) [right of=NW] {$\D$};
	\node (SW) [below of=NW, node distance=14mm] {$\Binp$};
	\path (NW) edge node {$g$} (NE)
	edge[right hook->] node[swap] {} (SW);
	\path (SW) edge node[swap] {$\lamb x{\bigsqcap_{x\prec\alpha}g(\alpha)}$} (NE);
	\end{tikzpicture}
	\end{align*}
\end{lemma}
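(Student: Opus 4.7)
For part (i), the argument is one line: by Lemma~\ref{lem:Scottify1}, the Cantor topology on $\Bin$ coincides with the subspace topology it inherits from the Scott topology on $\Binp$. Hence the inclusion $i:\Bin\hookrightarrow\Binp$ is continuous, and $f\rest\Bin=f\circ i$ is a composition of continuous maps. Equivalently, for any Scott-open $U\subs\D$, $(f\rest\Bin)^{-1}(U)=f^{-1}(U)\cap\Bin$, and $f^{-1}(U)$ is Scott-open by hypothesis, so its intersection with $\Bin$ is Cantor-open by part (i) of Lemma~\ref{lem:Scottify1}.

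For part (ii), define $\hat g:\Binp\to\D$ by $\hat g(x)=\bigsqcap\set{g(\alpha)}{\alpha\in\Bin,\ x\preceq\alpha}$. When $x\in\Bin$ the indexing set is $\{x\}$, so $\hat g$ extends $g$. Monotonicity is immediate: if $x\sqle y$, the set of infinite extensions of $y$ is contained in that of $x$, so meets over the larger set lie below meets over the smaller.

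The substantive content is preservation of directed suprema. A directed subset $X\subs\Binp$ is necessarily a prefix-chain, since any two elements with a common upper bound are prefix-comparable. Its supremum is either (a) a maximum element $x\in X$, in which case $\hat g(\sup X)=\hat g(x)=\bigsqcup_{y\in X}\hat g(y)$ by monotonicity, or (b) an infinite stream $\alpha\in\Bin$ with $X$ containing arbitrarily long finite prefixes of $\alpha$. In case (b), the inequality $\bigsqcup_{y\in X}\hat g(y)\sqle\hat g(\alpha)=g(\alpha)$ holds because $\alpha$ is one of the streams in each meet defining $\hat g(y)$, so each $\hat g(y)\sqle g(\alpha)$.

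The reverse inequality $g(\alpha)\sqle\bigsqcup_{y\in X}\hat g(y)$ is the main obstacle, and requires both hypotheses on $\D$ and $g$. Using that $\D$ is a continuous $\omega$-CPO, write $g(\alpha)=\bigsqcup\set{d}{d\ll g(\alpha)}$, and fix $d\ll g(\alpha)$. The set $U_d=\set{e}{d\ll e}$ is Scott-open in $\D$ and contains $g(\alpha)$, so by Cantor-continuity of $g$ there exists a basic Cantor-open $I_x$ with $\alpha\in I_x$ and $g(I_x)\subs U_d$; we may take $x$ to be a finite prefix of $\alpha$. Since $X$ contains arbitrarily long prefixes of $\alpha$, pick $y\in X$ with $x\preceq y$. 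Then every $\beta\in\Bin$ extending $y$ also extends $x$, so $d\ll g(\beta)$ and in particular $d\sqle g(\beta)$; taking the meet, $d\sqle\hat g(y)\sqle\bigsqcup_{z\in X}\hat g(z)$. Taking the supremum over all $d\ll g(\alpha)$ yields $g(\alpha)\sqle\bigsqcup_{z\in X}\hat g(z)$, completing the proof.
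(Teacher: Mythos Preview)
Your proof of (i) is exactly the paper's argument.

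For (ii) your proof is correct but follows a genuinely different route. The paper argues topologically: it fixes a basic Scott-open set $\up x$ of $\D$, unwinds the definition of $\hat g^{-1}(\up x)$ via the chain of equivalences
\[
y\in\hat g^{-1}(\up x)\ \Iff\ x\sqle\bigsqcap_{y\prec\alpha}g(\alpha)\ \Iff\ \forall\alpha\,(y\prec\alpha\Rightarrow x\sqle g(\alpha))\ \Iff\ \up y\subseteq g^{-1}(\up x),
\]
and then invokes Lemma~\ref{lem:Scottify1}(ii) to conclude that $\set{y}{\up y\subseteq g^{-1}(\up x)}$ is Scott-open. You instead establish Scott-continuity order-theoretically, by proving preservation of directed suprema: you observe that directed subsets of $\Binp$ are prefix chains, dispose of the case where the supremum is attained, and in the remaining case use the way-below relation in $\D$ together with Cantor-continuity of $g$ to squeeze $g(\alpha)$ between the two sides.

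The paper's argument is shorter and makes Lemma~\ref{lem:Scottify1}(ii) do the work; yours is more self-contained and makes explicit exactly where the continuity of $\D$ (via $g(\alpha)=\bigsqcup\set{d}{d\ll g(\alpha)}$ and openness of $\twoheaduparrow d$) and the Cantor-continuity of $g$ are each invoked. Both are valid characterizations of Scott-continuity, and your approach has the minor advantage that it does not rely on identifying the right basis of Scott-open sets in a merely continuous (not necessarily algebraic) $\D$.
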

\begin{proof}
A proof can be found in the Appendi which can be found in the complete version of this paperx.
\end{proof}

\section{Stochastic Semantics}
\label{sec:stochastic}

We review briefly the stochastic semantics from \cite{BFKMPS18a}. This semantics was based on a map
\begin{align*}
\psem-:\mathsf{Exp}\to\mathsf{Env}\to\mathsf{Cont}\to\mathsf{Toss}\to\RV
\end{align*}
where
\begin{itemize}
	\item
	$\RV$ is the set of \emph{random variables} $\Omega\to\Val$ from a sample space $\Omega$ taking values in a reflexive CPO $\Val$,
	\item
	$\mathsf{Exp}$ is the set of \emph{stochastic $\lambda$-terms} $M$,
	\item
	$\mathsf{Env}$ is the set of \emph{environments} $E:\Var\to\RV$,
	\item
	$\mathsf{Cont}$ is the set of \emph{continuations} $C:\RV\to\RV$, and
	\item
	$\mathsf{Toss}$ is the set of \emph{tossing processes} $T:\Omega\to\Bin$.
\end{itemize}
Thus $\psem MECT:\RV$. The Boolean-valued semantics interpreted properties in the Boolean algebra of measurable sets of $\Omega$.

We can simplify the definition of \cite{BFKMPS18a} with a few observations.
\begin{enumerate}[(i)]
	\item
	In \cite{BFKMPS18a}, the map $\psem-$ is parameterized by continuations $C:(\Omega\to\Val)\to(\Omega\to\Val)$. Suppose we restrict continuations to be of the form $SF=\lamb f{\lamb\omega{F\omega(f\omega)}}$ for some $F:\Omega\to[\Val\to\Val]$, where $[\Val\to\Val]$ denotes the Scott-continuous deterministic maps.\footnote{The operation $S$ is the familiar $S$-combinator from combinatory logic.} Then all continuations that arise in the inductive definition of $\psem-$ are also of this form. Formally adopting this restriction allows us to eliminate continuations altogether, thereby simplifying the presentation. This also makes sense at an intuitive level: A single trial is a single evaluation of the program and depends only on one sample from $\Omega$.
	\item
	Tossing processes in \cite{BFKMPS18a} are of type $\Omega\to\Bin$, where $\Omega$ is an abstract sample space. A large part of the development of \cite{BFKMPS18a} was concerned with invariance properties of measure-preserving transformations of $\Omega$. For our purposes, there is no reason not to take the sample space to be $\Bin$ with the standard Lebesgue measure. Thus tossing processes become measure-preserving maps $T:\Bin\to\Bin$. This allows a more concrete treatment. A comprehensive characterization of such processes is given in \S\ref{sec:tossing}.
	\item
	The definition of \cite{BFKMPS18a} included a fixpoint operator. Our use of capsules allows us to eliminate this operator without loss of expressiveness.
\end{enumerate}

In addition to these simplifications, we introduce a more radical change that will admit a full-fledged operational semantics, namely the dynamic scoping of the random source.

The type of the semantic map is now
\begin{align*}
\psem{-}:\Exp\to\Env\to\Toss\to\RV.
\end{align*}
The values $\RV$ do not form a reflexive CPO, however they are built out of a reflexive CPO, as explained below in \S\ref{sec:FunLam}.
\begin{align*}
\Fun:\RV\to[\Toss\to\RV\to\RV]\\
\Lam:[\Toss\to\RV\to\RV]\to\RV
\end{align*}
We will define these functions explicitly below in \S\ref{sec:FunLam}.

\begin{definition}\ 
	\label{def:stochastic}
	\begin{enumerate}[(i)]
		\item
		$\psem{x}ET = E(x)$
		\item
		$\psem{MN}ET = \Fun(\psem{M}E(\proj_0\circ T))(\proj_1\circ T)(\psem{N}E(\proj_2\circ T))$
		\item
		$\psem{\lamb xM}ET = \Lam(\lamb{Tv}{\psem{M}E[v/x]T})$
		\item
		$\psem{M\oplus N}ET = \lamb\omega{\?{\hd(T\omega)}{\psem{M}E(\tl\circ T)\omega}{\psem{N}E(\tl\circ T)\omega}}$
	\end{enumerate}
	where clause (ii) uses the notation $\proj_i(\alpha)$ ($\alpha_i$ when evident from the context) to refer to the subsequence of $\alpha$ consisting of bits whose indices are $i\bmod 3$; thus $\proj_1(\alpha_0\alpha_1\alpha_2\cdots) = \alpha_1\alpha_4\alpha_7\cdots$, and clause (iv) uses the ternary predicate
	\begin{align}
	\?bst = \begin{cases}
	s, & \text{if $b=1$,}\\
	t, & \text{if $b=0$.}
	\end{cases}
	\label{eq:ternary}
	\end{align}
        
	In $\psem ME$, we assume that $\FV(M)\subs\dom E$.
\end{definition}

\section{Deterministic Semantics}
\label{sec:deterministic}

The observation of \S\ref{sec:stochastic} that allowed continuations to be eliminated can be carried further. All components in the definition of $\psem-$ are parameterized by sample points $\omega\in\Omega$, but as observed, there is no resampling in the course of a single trial; it is the same $\omega$. The function $\psem{-}$ does not really depend on the whole tossing process $T$ or the whole environment $E$, which are random variables, but only on their values. This observation allows us to develop an intermediate \emph{deterministic} denotational semantics in which all probabilistic choices are resolved in advance. The program runs deterministically, resolving probabilistic choices by consulting a preselected stack of random bits. In this section we introduce this semantics and develop some of its basic properties. Later, in \S\ref{sec:equivalence}, we will prove that it is equivalent to the stochastic semantics of \cite{BFKMPS18a} as modified in \S\ref{sec:stochastic} (Theorem \ref{thm:equivalence}). 

\subsection{A Domain of Values}

Barendregt \cite[\S5]{Ba84} presents several constructions of reflexive CPOs that can serve as denotational models of the untyped $\lambda$-calculus. One concrete such model, due to Engeler \cite{Engeler81,Longo83}, is a reflexive $\omega$-algebraic CPO $\powerset Q$ ordered by inclusion, where $Q$ is a certain countable set. The basic Scott-open sets are $\up a = \set b{a\subs b}$, where $a$ is a finite subset of $Q$. A function $\powerset Q\to\powerset Q$ is \emph{continuous} if it is continuous in this topology; equivalently, if $fb = \bigcup_{c\in\Pfin b}fc$.

In this section we present a version of the Engeler model modified to include a random source as an argument to continuous functions using the Scottified Cantor space of \S\ref{sec:Scottify}. Define
\begin{align*}
Q_0 &= \{\emptyset\} &
Q_{n+1} &= Q_n \uplus (2\star\times\Pfin{Q_n}\times Q_n) &
Q &= \bigcup_n Q_n
\end{align*}
and let $\Val = \powerset Q$, ordered by inclusion. The basic Scott-open sets of $\Val$ are $\up a = \set b{a\subs b}$, where $a\in\Pfin Q$. A function $\Val\to\Val$ is \emph{continuous} if it is continuous in this topology.

A function $f:\Bin\to\Val\to\Val$ is \emph{continuous} if it is continuous in both variables with respect to the Scott topology on $\Val$ and the Cantor topology on $\Bin$. The continuous functions of this type are denoted $[\Bin\to\Val\to\Val]$. Intuitively, $f$ is continuous if its value on $\alpha\in\Bin$ and $b\in\Val$ depends only on finite prefixes of $\alpha$ and finite subsets of $b$.

\begin{lemma}
	Let $f:[\Bin\to\Val\to\Val]$. Then
	\begin{align*}
	f\alpha b &= \bigcup_{c\in\Pfin b} \bigcup_{x\prec\alpha}\bigcap_{\beta\in I_x}f\beta c.
	\end{align*}
\end{lemma}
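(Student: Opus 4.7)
The plan is to peel off the two continuity hypotheses one at a time, using Scott continuity in the second argument to expose the outermost union over $c\in\Pfin b$, and using Cantor continuity in the first argument (via Lemma~\ref{lem:Scottify2}(ii)) to expose the $\bigcup_{x\prec\alpha}\bigcap_{\beta\in I_x}$ structure. I use throughout that ``continuity in both variables'' entails continuity in each argument separately when the other is held fixed.

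First, since $f(\alpha,\cdot)\colon\Val\to\Val$ is Scott-continuous and $\Val=\powerset Q$ is algebraic with the finite subsets of $Q$ as its compact elements, the directed join $b=\bigcup_{c\in\Pfin b}c$ is preserved, giving
\begin{align*}
f\alpha b \;=\; \bigcup_{c\in\Pfin b} f\alpha c.
\end{align*}
This reduces the lemma to showing that for each fixed $c\in\Pfin Q$ the map $g_c\colon\Bin\to\Val$ defined by $g_c(\alpha)=f\alpha c$ satisfies $g_c(\alpha)=\bigcup_{x\prec\alpha}\bigcap_{\beta\in I_x}g_c(\beta)$.

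For this second step, $g_c$ is Cantor-continuous by continuity of $f$ in its first argument, and the meet in $\Val=\powerset Q$ is set intersection. Lemma~\ref{lem:Scottify2}(ii) then produces a Scott-continuous extension $\tilde g_c\colon\Binp\to\Val$ whose value on finite strings is $\tilde g_c(x)=\bigcap_{\beta\in I_x}g_c(\beta)$. Since $\alpha\in\Bin$ is the directed supremum of its finite prefixes in $\Binp$, Scott-continuity of $\tilde g_c$ yields
\begin{align*}
g_c(\alpha) \;=\; \tilde g_c(\alpha) \;=\; \bigsqcup_{x\prec\alpha}\tilde g_c(x) \;=\; \bigcup_{x\prec\alpha}\bigcap_{\beta\in I_x}g_c(\beta).
\end{align*}
Chaining the two identities gives the lemma. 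I do not anticipate any real obstacle here: both steps are direct invocations of Scott continuity together with results already established earlier in the paper. The only minor point of care is confirming that the joint ``continuity in both variables'' hypothesis supplies the two separate continuity properties being used, and that the Engeler-style domain $\powerset Q$ satisfies the hypotheses of Lemma~\ref{lem:Scottify2}(ii) as a continuous $\omega$-CPO with a meet operation (namely intersection).
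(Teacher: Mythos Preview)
Your proposal is correct and follows essentially the same approach as the paper: both use Scott continuity in the second argument to reduce to finite $c$, and Cantor continuity in the first argument to obtain the $\bigcup_{x\prec\alpha}\bigcap_{\beta\in I_x}$ form. The only difference is organizational---you invoke Lemma~\ref{lem:Scottify2}(ii) and then use Scott continuity of the extension at $\alpha=\bigsqcup_{x\prec\alpha}x$, whereas the paper unfolds this same computation inline via a chain of equivalences on basic open sets $\up a$ (which is essentially the proof of Lemma~\ref{lem:Scottify2}(ii) replayed).
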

\begin{proof}
	Let $c\in\Val$. By the continuity of $f$ in its first argument, $\lamb\alpha{f\alpha c}:\Bin\to\Val$ is continuous, therefore for any basic open set $\up a$,
	\begin{align*}
	\alpha \in (\lamb\alpha{f\alpha c})^{-1}(\up a)
	&\Iff \exists x\prec\alpha\ I_x\subs (\lamb\alpha{f\alpha c})^{-1}(\up a).
	\end{align*}
	Then
	\begin{align*}
	a\subs f\alpha c
	&\Iff f\alpha c\in\up a\\
	&\Iff \alpha \in (\lamb\alpha{f\alpha c})^{-1}(\up a)\\
	&\Iff \exists x\prec\alpha\ I_x\subs (\lamb\alpha{f\alpha c})^{-1}(\up a)\\
	&\Iff \exists x\prec\alpha\ \forall\beta\in I_x\ \beta\in (\lamb\alpha{f\alpha c})^{-1}(\up a)\\
	&\Iff \exists x\prec\alpha\ \forall\beta\in I_x\ a\subs f\beta c\\
	&\Iff a\subs\bigcup_{x\prec\alpha}\bigcap_{\beta\in I_x}f\beta c.
	\end{align*}
	As $a$ was arbitrary, for any $b\in\Val$,
	\begin{align*}
	f\alpha b
	&= \bigcup_{c\in\Pfin b} f\alpha c
	= \bigcup_{c\in\Pfin b} \bigcup_{x\prec\alpha}\bigcap_{\beta\in I_x}f\beta c.
	\end{align*}
\end{proof}

To obtain a reflexive domain, we need to construct continuous maps
\begin{align*}
& \fun : \Val\to[\Bin\to\Val\to\Val]\\
& \lam : [\Bin\to\Val\to\Val]\to\Val
\end{align*}
such that $\fun\circ\lam=\id$.
\begin{align}
\fun a &= \lamb{\beta v}{\set{q\in Q}{\exists x\prec\alpha\ \exists b\in\Pfin v\ (x,b,q)\in a}}\nonumber\\
&= \bigcup_{x\prec\alpha}\bigcup_{c\in\Pfin b}\set{q}{(x,c,q)\in a}\label{eq:fundef}\\
\lam f &= \set{(x,c,q)\in 2^*\times\Pfin Q\times Q}{\forall\beta\in I_x\ q\in f\beta c}\nonumber\\
&\quad\cup\{\emptyset\}.\label{eq:lamdef}
\end{align}
Then
\begin{align*}
\fun(\lam f)\alpha b 
&= \bigcup_{x\prec\alpha}\bigcup_{c\in\Pfin b}\set{q}{(x,c,q)\in\lam f}\\
&= \bigcup_{x\prec\alpha}\bigcup_{c\in\Pfin b}\set{q}{\forall\beta\in I_x\ q\in f\beta c}\\
&= \bigcup_{c\in\Pfin b}\bigcup_{x\prec\alpha}\bigcap_{\beta\in I_x}f\beta c = f\alpha b.
\end{align*}
Also, note that since $\bot=\emptyset$ in $\Val$, $\fun\,\bot = \lamb{\beta v}\bot$, but $\lam\bot=\{\emptyset\}\ne\bot$. This is important for call-by-value, as we must distinguish $\Omega$ from $\lamb x\Omega$ for our adequacy result of \S\ref{sec:adequacy}.

\subsection{The Semantic Function}

For partial functions $f:D\pfun E$, define $\dom f = \set{x\in D}{\text{$f(x)$ is defined}}$. Equivalently, for functions $f:D\to E_\bot$, define $\dom f = \set{x\in D}{f(x)\ne\bot}$. Let $\FV(M)$ denote the free variables of $M$.

The type of our deterministic semantic function is
\begin{align*}
\dsem{-}:\Exp\to\Env'\to\Bin\to\Val,
\end{align*}
where $\Env' = \Var\to\Val$ is the set of (deterministic) environments.

\begin{definition}\ 
	\label{def:deterministic}
	\begin{enumerate}[(i)]
		\item
		$\dsem{x}e\alpha = e(x)$
		\item
		$\dsem{MN}e\alpha = \fun(\dsem{M}e(\proj_0(\alpha)))(\proj_1(\alpha))(\dsem{N}e(\proj_2(\alpha)))$
		\item
		$\dsem{\lamb xM}e\alpha = \lam(\lamb{\beta v}{\dsem{M}e\rebind vx\beta})$
		\item
		$\dsem{M\oplus N}e\alpha = \hd\alpha\,?\,\dsem{M}e(\tl\alpha) : \dsem{N}e(\tl\alpha)$
	\end{enumerate}
	where clause (iv) uses the ternary predicate \eqref{eq:ternary} and $\fun$ and $\lam$ are defined in \eqref{eq:fundef} and \eqref{eq:lamdef}, respectively. In $\dsem Me$, we assume that $\FV(M)\subs\dom e$.
	In (iii), we interpret the metaexpression $\lamb{\beta v}{\dsem{M}e\rebind vx\beta}$ as strict; thus
	\begin{align*}
	(\lamb{\beta v}{\dsem{M}e\rebind vx\beta})\,\alpha\,\bot=\bot.
	\end{align*}
\end{definition}
Note that this is completely deterministic. Probabilistic choices are resolved by consulting a preselected stack of random bits $\alpha$.

In the clause for $MN$, instead of $\evens$ and $\odds$ as in \cite{BFKMPS18a}, we divide the coins into three streams for use in, respectively, the evaluation of $M$, the evaluation of $N$,
and the application of the value of $M$ to the value of $N$.

This definition is well founded, but the resulting metaexpression is a $\lambda$-term that must be evaluated in the metasystem,
and that evaluation may not terminate. We define the value to be $\bot$ when that happens. For example,
consider $\dsem{\Omega}e\alpha$, where $\Omega=(\lamb x{xx})(\lamb x{xx})$. Define
\begin{align*}
u &\defeq \dsem{\lamb x{xx}}e\alpha\\
&= \lam\,(\lamb{\beta v}{\dsem{xx}e\subst vx\beta})\\
&= \lam\,(\lamb{\beta v}{\fun\,(\dsem xe\subst vx \prj 0\beta)\,\prj 1\beta\,(\dsem xe\subst vx \prj 2\beta)})\\
&= \lam\,(\lamb{\beta v}{\fun\,v\,\prj 1\beta\,v})\ne\bot.
\end{align*}
Note that this value is independent of $\alpha$, due to the fact that coins are dynamically scoped. Then
\begin{align*}
\dsem{\Omega}e\alpha
&= \dsem{(\lamb x{xx})(\lamb x{xx})}e\alpha\\
&= \fun(\dsem{\lamb x{xx}}e\prj 0\alpha)\,\prj 1\alpha\,(\dsem{\lamb x{xx}}e\prj 2\alpha)\\
&= \fun u\,\prj 1\alpha\,u\\
&= \fun (\lam(\lamb{\beta v}{\fun\,v\,\prj 1\beta\,v}))\,\prj 1\alpha\,u\\
&= (\lamb{\beta v}{\fun\,v\,\prj 1\beta\,v})\,\prj 1\alpha\,u\\
&= \fun\,u\,\prj{11}\alpha\,u\\
&= \cdots
\end{align*}

\begin{lemma}
	\label{lem:garbage}
	For $e_1,e_2:\Var\to\Val$, if $e_1\sqle e_2$ and $\FV(M)\subs\dom e_1$, then $\dsem Me_1\sqle\dsem Me_2$.
\end{lemma}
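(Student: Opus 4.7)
The proof is by structural induction on $M$, relying throughout on the monotonicity of $\fun$ and $\lam$ as read off from formulas \eqref{eq:fundef} and \eqref{eq:lamdef}. First note that $e_1 \sqle e_2$ (pointwise inclusion on $\Val = \powerset Q$) entails $\dom e_1 \subs \dom e_2$, since $e_1(x) \ne \emptyset$ forces $e_2(x) \supseteq e_1(x) \ne \emptyset$; so the side condition $\FV(M) \subs \dom e_2$ holds automatically. Next, $\fun\,a\,\alpha\,v = \bigcup_{x \prec \alpha}\bigcup_{c \in \Pfin v}\{q : (x,c,q) \in a\}$ is visibly monotone in both $a$ and $v$; and $\lam\,f$ is monotone in $f$ in the pointwise order on $[\Bin \to \Val \to \Val]$, since enlarging $f\beta c$ only makes the defining predicate of $\lam\,f$ easier to satisfy.

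The variable and choice cases are immediate. For $M = x$, $\dsem{x}e_i\alpha = e_i(x)$ and $e_1(x) \sqle e_2(x)$ by hypothesis. For $M = K \oplus N$, the value is $\dsem{K}e_i\tl\alpha$ or $\dsem{N}e_i\tl\alpha$ according to $\hd\alpha$, and the inductive hypothesis handles each branch, noting that $\FV(K),\FV(N) \subs \FV(M) \subs \dom e_1$. For an application $M = KN$, the inductive hypothesis gives $\dsem{K}e_1\prj 0\alpha \sqle \dsem{K}e_2\prj 0\alpha$ and $\dsem{N}e_1\prj 2\alpha \sqle \dsem{N}e_2\prj 2\alpha$, and two applications of monotonicity of $\fun$ in its first and third arguments (at fixed middle argument $\prj 1\alpha$) give $\dsem{KN}e_1\alpha \sqle \dsem{KN}e_2\alpha$.

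The only case requiring actual care is the abstraction $M = \lamb xK$, because of the strictness convention on the metalevel $\lambda$. Here $\dsem{\lamb xK}e_i\alpha = \lam(\lamb{\beta v}{\dsem{K}e_i\rebind vx\beta})$, so by monotonicity of $\lam$ it suffices to show $\dsem{K}e_1\rebind vx\beta \sqle \dsem{K}e_2\rebind vx\beta$ for every $\beta \in \Bin$ and $v \in \Val$. If $v = \bot$, both sides equal $\bot$ by the strictness clause of Definition \ref{def:deterministic}(iii), so the inclusion is trivial. If $v \ne \bot$, then $x \in \dom(e_1\rebind vx)$; combined with $\FV(\lamb xK) \subs \dom e_1$ this gives $\FV(K) \subs \dom(e_1\rebind vx)$, and since $e_1\rebind vx \sqle e_2\rebind vx$, the inductive hypothesis on $K$ closes the step. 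Thus the main bit of friction, such as it is, is the case split on whether $v = \bot$ in the abstraction case; everything else is a direct monotonicity chase from the explicit formulas for $\fun$ and $\lam$.
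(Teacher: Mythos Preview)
Your proof is correct and follows the same structural induction as the paper's own argument; the only difference is that you spell out the monotonicity of $\fun$ and $\lam$ and the $v=\bot$ subcase in the abstraction clause, which the paper leaves implicit. No substantive deviation.
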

\begin{proof}
	Note that if $e_1\sqle e_2$, then $\dom e_1\subs\dom e_2$ and $e_1(x)\sqle e_2(x)$ for all $x\in\dom e_1$. 
	The proof is a straightforward induction on the structure of $M$.
	Suppose $e_1\sqle e_2$.
	\begin{align*}
	\dsem{x}e_1\alpha = e_1(x) \sqle e_2(x) = \dsem{x}e_2\alpha,\ x\in\dom e_1.
	\end{align*}
	\begin{align*}
	\dsem{MN}e_1\alpha
	&= \fun(\dsem{M}e_1(\proj_0(\alpha)))(\proj_1(\alpha))(\dsem{N}e_1(\proj_2(\alpha)))\\
	&\sqle \fun(\dsem{M}e_2(\proj_0(\alpha)))(\proj_1(\alpha))(\dsem{N}e_2(\proj_2(\alpha)))\\
	&= \dsem{MN}e_2\alpha.
	\end{align*}
	\begin{align*}
	\dsem{\lamb xM}e_1\alpha
	&= \lam(\lamb{\beta v}{\dsem{M}e_1\rebind vx\beta})\\
	&\sqle \lam(\lamb{\beta v}{\dsem{M}e_2\rebind vx\beta})\\
	&= \dsem{\lamb xM}e_2\alpha.
	\end{align*}
	\begin{align*}
	\dsem{M\oplus N}e_1\alpha
	&= \hd\alpha\,?\,\dsem{M}e_1(\tl\alpha) : \dsem{N}e_1(\tl\alpha)\\
	&\sqle \hd\alpha\,?\,\dsem{M}e_2(\tl\alpha) : \dsem{N}e_2(\tl\alpha)\\
	&= \dsem{M\oplus N}e_2\alpha.
	\end{align*}
\end{proof}

To extend $\dsem{-}$ to capsules, we combine a semantic environment $\Var\to\Val$ as used in Definition \ref{def:deterministic} and a capsule environment $\Var\to\Lambda_\bot$ in a single mixed environment $\sigma:\Var\to\Val+\Lambda_\bot$\footnote{In the coproduct, the $\bot$'s of the two domains are coalesced.}. From this we can obtain a new semantic environment $\sigma\star:\Var\to\Val$ as follows.
Consider the map
\begin{align*}
& P_\sigma:(\Var\to\Val)\to(\Var\to\Val)\\
& P_\sigma(\ell)(x) = \begin{cases}
\sigma(x), & \sigma(x)\in\Val\\
\dsem{\sigma(x)}\ell\alpha, & \sigma(x)\in\Lambda\\
\bot, & \sigma(x)=\bot
\end{cases}
\end{align*}
where in the second case we use Definition \ref{def:deterministic}(iii). The $\alpha$ there can be any element of $\Bin$, as $\dsem{\lamb xM}\ell:\Bin\to\Val$ is a constant function. The third case is already included in the first, so henceforth we omit explicit mention of it. Note that $\dom P_\sigma(\ell) = \dom\sigma$.

\begin{lemma}
	\label{lem:garbage0}
	$P_\sigma(\ell)$ is monotone in both $\sigma$ and $\ell$.
\end{lemma}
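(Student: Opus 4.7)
The plan is to prove the two monotonicity claims separately, each by a pointwise case analysis on the shape of $\sigma(x)$ (respectively $\sigma_1(x)$), corresponding to the three cases in the definition of $P_\sigma(\ell)$. Since the codomain ordering on $\Var\to\Val$ is pointwise, it suffices to show the inequality holds at every $x$ in the domain.

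For monotonicity in $\ell$, I would fix $\sigma$ and suppose $\ell_1 \sqle \ell_2$. In the first clause of the definition ($\sigma(x)\in\Val$) and the third clause ($\sigma(x)=\bot$), the value $P_\sigma(\ell)(x)$ does not depend on $\ell$ at all, so the inequality is in fact an equality. The only nontrivial case is the middle clause where $\sigma(x)\in\Lambda$. Here $P_\sigma(\ell_i)(x) = \dsem{\sigma(x)}\ell_i\alpha$, and I would invoke Lemma~\ref{lem:garbage}: since $\sigma(x)$ is a $\lambda$-abstraction drawn from a capsule environment, its free variables lie in $\dom\sigma$, and the compatibility assumption $\FV(\sigma(x))\subs\dom\ell_1$ is inherited from the intended usage of $P_\sigma$. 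Lemma~\ref{lem:garbage} then yields $\dsem{\sigma(x)}\ell_1\alpha \sqle \dsem{\sigma(x)}\ell_2\alpha$, as required.

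For monotonicity in $\sigma$, I would fix $\ell$ and suppose $\sigma_1 \sqle \sigma_2$. The key observation is that in the coproduct $\Val + \Lambda_\bot$ with coalesced bottoms, $\Lambda$-elements sit discretely above $\bot$ but are incomparable with nonbottom $\Val$-elements and with each other. Hence, if $\sigma_1(x)\in\Lambda$ and $\sigma_1(x)\sqle\sigma_2(x)$, then $\sigma_2(x)=\sigma_1(x)$; if $\sigma_1(x)\in\Val$, then $\sigma_2(x)\in\Val$ with $\sigma_1(x)\sqle\sigma_2(x)$; and if $\sigma_1(x)=\bot$, nothing is claimed about $\sigma_2(x)$. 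Running the three cases: when $\sigma_1(x)=\bot$ the left side is $\bot$ and the inequality is automatic; when $\sigma_1(x)\in\Val$, both sides equal the corresponding $\sigma_i(x)$ and monotonicity of $\sigma_1\sqle\sigma_2$ gives the result; when $\sigma_1(x)\in\Lambda$, both sides equal $\dsem{\sigma_1(x)}\ell\alpha$.

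The only real subtlety, rather than an obstacle, is bookkeeping around the coproduct order $\Val + \Lambda_\bot$ and confirming that the $\Lambda$-summand is effectively flat so that the $\sigma$-monotonicity case $\sigma_1(x)\in\Lambda$ forces equality, not a strict inequality. Once that is made explicit, both directions reduce to either an equality or a single invocation of Lemma~\ref{lem:garbage}.
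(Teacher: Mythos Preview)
Your proposal is correct and matches the paper's own proof essentially step for step: both arguments handle $\ell$-monotonicity by invoking Lemma~\ref{lem:garbage} in the $\sigma(x)\in\Lambda$ case (the other cases being independent of $\ell$), and handle $\sigma$-monotonicity via the flatness of the $\Lambda_\bot$ summand, which forces $\sigma_1(x)=\sigma_2(x)$ whenever $\sigma_1(x)\in\Lambda$. Your explicit discussion of the coproduct order is slightly more detailed than the paper's one-line ``since $\Lambda_\bot$ is a flat domain,'' but the content is the same.
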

\begin{proof}
	Since $\Lambda_\bot$ is a flat domain, if $\sigma_1\sqle\sigma_2$ and $\sigma_1(x)\in\Lambda$, then $\sigma_1(x)=\sigma_2(x)$. It follows that for all $\ell:\Var\to\Val$ and $x\in\Var$,
	\begin{align}
	P_{\sigma_1}(\ell)(x)
	&= \begin{cases}
	\sigma_1(x), & \sigma_1(x)\in\Val\\
	\dsem{\sigma_1(x)}\ell\alpha, & \sigma_1(x)\in\Lambda
	\end{cases}\nonumber\\
	&\sqle \begin{cases}
	\sigma_2(x), & \sigma_2(x)\in\Val\\
	\dsem{\sigma_1(x)}\ell\alpha, & \sigma_2(x)\in\Lambda
	\end{cases}\label{eq:Psigma}\\
	&= \begin{cases}
	\sigma_2(x), & \sigma_2(x)\in\Val\\
	\dsem{\sigma_2(x)}\ell\alpha, & \sigma_2(x)\in\Lambda
	\end{cases}\nonumber\\
	&= P_{\sigma_2}(\ell)(x).\nonumber
	\end{align}
	If $\ell_1\sqle\ell_2$, then by Lemma \ref{lem:garbage},
	$\dsem{\sigma(x)}\ell_1\alpha \sqle \dsem{\sigma(x)}\ell_2\alpha$ for $\sigma(x)\in\Lambda$, therefore $P_{\sigma}(\ell_1)(x) \sqle P_{\sigma}(\ell_2)(x)$.
\end{proof}

By the Knaster-Tarski theorem, $P_\sigma$ has a least fixpoint
\begin{align}
\sigma\star(x) &= \begin{cases}
\sigma(x), & \sigma(x)\in\Val\\
\dsem{\sigma(x)}\sigma\star\alpha, & \sigma(x)\in\Lambda
\end{cases}
\label{eq:estar}
\end{align}
and we define $\dsem M\sigma\alpha=\dsem M\sigma\star\alpha$, where the right-hand side is by Definition \ref{def:deterministic}.
With this formalism, we have
\begin{align*}
\dsem{\rec fxM}\sigma &= \dsem{f}{\sigma\rebind{\lamb xM}f}.
\end{align*}

\begin{lemma}
	\label{lem:garbage1}
	If $\sigma_1,\sigma_2:\Var\to\Val+\Lambda_\bot$ are mixed environments with $\sigma_1\sqle\sigma_2$, then $\sigma_1\star\sqle\sigma_2\star$. In addition, if $\sigma_1(x)=\sigma_2(x)$ for all $x\in\dom\sigma_1$, then $\sigma_1\star(x)=\sigma_2\star(x)$ for all $x\in\dom\sigma_1\star=\dom\sigma_1$.
\end{lemma}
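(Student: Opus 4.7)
The plan is to split the statement into three subgoals: (1) the inequality $\sigma_1\star\sqle\sigma_2\star$, (2) the domain identity $\dom\sigma_1\star=\dom\sigma_1$, and (3) the pointwise equality on $\dom\sigma_1$ under the stronger hypothesis. For (1), I would invoke Lemma~\ref{lem:garbage0}: since $P_\sigma(\ell)$ is monotone in $\sigma$, we get $P_{\sigma_1}(\sigma_2\star)\sqle P_{\sigma_2}(\sigma_2\star)=\sigma_2\star$, so $\sigma_2\star$ is a pre-fixpoint of $P_{\sigma_1}$, and Knaster--Tarski (which characterizes $\sigma_1\star$ as the least pre-fixpoint of $P_{\sigma_1}$) yields $\sigma_1\star\sqle\sigma_2\star$.

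For (2), I would unfold \eqref{eq:estar} case by case: $\sigma\star(x)=\sigma(x)$ when $\sigma(x)\in\Val\setminus\{\bot\}$, which is already nonbottom; $\sigma\star(x)=\dsem{\sigma(x)}{\sigma\star}\alpha=\lam(\cdots)$ when $\sigma(x)\in\Lambda$, and by \eqref{eq:lamdef} this set always contains $\emptyset$, hence is nonbottom; and $\sigma\star(x)=\bot$ when $\sigma(x)=\bot$. So $\sigma\star(x)\ne\bot$ iff $x\in\dom\sigma$.

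For (3), one direction is already (1). For the reverse inequality $\sigma_2\star(x)\sqle\sigma_1\star(x)$ on $\dom\sigma_1$, I would appeal to $\omega$-continuity of $P_\sigma$ in $\ell$ (immediate from continuity of $\dsem{\cdot}$ in its environment argument) to write $\sigma\star=\bigsqcup_n P_\sigma^n(\bot)$, and then show by induction on $n$ that $P_{\sigma_1}^n(\bot)(x)=P_{\sigma_2}^n(\bot)(x)$ for every $x\in\dom\sigma_1$. The inductive step, in the case $\sigma_1(x)=\sigma_2(x)\in\Lambda$, reduces to the identity $\dsem{\sigma_1(x)}{P_{\sigma_1}^n(\bot)}\alpha=\dsem{\sigma_1(x)}{P_{\sigma_2}^n(\bot)}\alpha$, which follows from the standard ``$\dsem{M}{\ell}\alpha$ depends only on $\ell\rest\FV(M)$'' lemma (a routine structural induction on $M$) together with the capsule condition $\FV(\sigma_1(x))\subs\dom\sigma_1$, on which $P_{\sigma_1}^n(\bot)$ and $P_{\sigma_2}^n(\bot)$ agree by the induction hypothesis. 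Taking suprema finishes the argument. The main obstacle is this last step: it silently depends on two routine but unstated auxiliary facts---the $\omega$-continuity of $P_\sigma$ in $\ell$ (needed to license the Kleene iteration) and the support-by-free-variables property of $\dsem{\cdot}$---without which the two fixpoints cannot be compared pointwise on $\dom\sigma_1$, since Knaster--Tarski alone only yields $\sigma_1\star\sqle\sigma_2\star\!\rest\dom\sigma_1$ and not the reverse direction.
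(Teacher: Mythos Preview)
Your proposal is correct and follows essentially the same route as the paper: part (1) via Lemma~\ref{lem:garbage0} and the least-prefixpoint property, part (2) by unfolding \eqref{eq:estar} and noting $\lam(\cdots)\ne\bot$, and part (3) by comparing Kleene iterates of $P_{\sigma_1}$ and $P_{\sigma_2}$ at points of $\dom\sigma_1$. If anything you are more careful than the paper, which jumps directly from ``$P_{\sigma_1}(\ell)(x)=P_{\sigma_2}(\ell)(x)$ for all $\ell$'' to equality of the iterates without spelling out the inductive step; your observation that this step needs the capsule condition $\FV(\sigma_1(x))\subs\dom\sigma_1$ together with the free-variable support property of $\dsem{-}$ is exactly the missing justification. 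One minor remark: the paper writes the fixpoint as a supremum over ordinal iterates $\sup_\alpha P_\sigma^\alpha(\bot)$, so strictly speaking $\omega$-continuity is not needed---transfinite induction (with the obvious limit step) suffices---but your appeal to $\omega$-continuity is harmless since it does hold here.
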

\begin{proof}
	From \eqref{eq:estar} and the fact that $\dsem{\lamb xM}e\alpha\ne\bot$ we have that $\dom\sigma_i = \dom \sigma_i\star$. By Lemma \ref{lem:garbage0}, we have $P_{\sigma_1}(\sigma_2\star) \sqle P_{\sigma_2}(\sigma_2\star) = \sigma_2\star$, so $\sigma_2\star$ is a prefixpoint of $P_{\sigma_1}$. Since $\sigma_1\star$ is the least prefixpoint, $\sigma_1\star\sqle\sigma_2\star$.
	
	In addition, if $\sigma_1$ and $\sigma_2$ agree on $\dom\sigma_1$, then for $x\in\dom\sigma_1$, equality holds in \eqref{eq:Psigma}, thus $P_{\sigma_1}(\ell)(x)=P_{\sigma_2}(\ell)(x)$. As this is true for all $\ell$, we have
	\begin{align*}
	\sigma_1\star(x) &= \sup_\alpha P_{\sigma_1}^\alpha(\bot)(x) = \sup_\alpha P_{\sigma_2}^\alpha(\bot)(x) = \sigma_2\star(x). &&
\qedhere
\end{align*}
\end{proof}

\begin{lemma}
	\label{lem:sigmastar}
	Let $\sigma:\Var\to\Val+\Lambda_\bot$ be a mixed environment.
	If $v\in\Lambda$, $y\not\in\FV(v)$, and $y\not\in\dom\sigma$, then
	$\sigma\rebind vy\star = \sigma\star\rebind{\dsem v\sigma\star\alpha}y$.
\end{lemma}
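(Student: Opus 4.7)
The plan is to verify the equation pointwise as an equality of semantic environments $\Var\to\Val$, splitting into the cases $x\ne y$ and $x=y$. Both sides have the same overall shape; the content is in pinning down where (and why) they coincide.

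For $x\ne y$: since $y\not\in\dom\sigma$, the mixed environments $\sigma$ and $\sigma\rebind vy$ agree on $\dom\sigma$. The second clause of Lemma \ref{lem:garbage1} then yields $\sigma\star(x)=\sigma\rebind vy\star(x)$ for every $x\in\dom\sigma$; for $x$ outside $\dom\sigma\cup\{y\}$, both environments return $\bot$. So $\sigma\rebind vy\star$ and $\sigma\star$ coincide on all of $\Var\setminus\{y\}$, and since $\sigma\star\rebind{\dsem v\sigma\star\alpha}y$ also agrees with $\sigma\star$ off of $y$, the two sides of the desired equation match at every $x\ne y$.

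For $x=y$: I would unfold the fixpoint equation \eqref{eq:estar}. Because $\sigma\rebind vy(y)=v\in\Lambda$,
\begin{align*}
\sigma\rebind vy\star(y) = \dsem v{\sigma\rebind vy\star}\alpha.
\end{align*}
It then remains to argue that this equals $\dsem v{\sigma\star}\alpha$, which is exactly the value that $\sigma\star\rebind{\dsem v\sigma\star\alpha}y$ assigns to $y$. The justification is that $y\not\in\FV(v)$, so the value of $\dsem v{\cdot}\alpha$ depends only on the restriction of its environment to $\Var\setminus\{y\}$, and $\sigma\rebind vy\star$ and $\sigma\star$ already agree there by the previous step.

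The only part I expect to require even a brief remark is that last appeal to the fact that $\dsem M{\cdot}\alpha$ depends only on $\FV(M)$. This is not isolated as a lemma earlier in the paper, but it is a routine structural induction on $M$ from Definition \ref{def:deterministic}. Apart from that bookkeeping, the substantive content is already packaged into Lemma \ref{lem:garbage1}, so I do not anticipate any real obstacle in carrying the argument out.
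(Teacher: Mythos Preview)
Your proposal is correct and follows essentially the same route as the paper: split into the cases $x=y$ and $x\ne y$, use the fixpoint equation \eqref{eq:estar} at $y$, and appeal to Lemma~\ref{lem:garbage1} (with $\sigma_1=\sigma$, $\sigma_2=\sigma\rebind vy$) to see that $\sigma\star$ and $\sigma\rebind vy\star$ agree on $\dom\sigma$. The paper's write-up unfolds $P_{\sigma\rebind vy}$ explicitly in both cases rather than citing the second clause of Lemma~\ref{lem:garbage1} directly, but the content is the same; your observation that the step $\dsem v{\sigma\rebind vy\star}\alpha=\dsem v{\sigma\star}\alpha$ ultimately rests on ``$\dsem M$ depends only on $e\restriction\FV(M)$'' is exactly what the paper is invoking (implicitly) when it writes ``by Lemma~\ref{lem:garbage1}'' at \eqref{eq:sigmastarB} and \eqref{eq:sigmastarD}.
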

\begin{proof}
	\begin{align}
	\lefteqn{\sigma\rebind vy\star(y)}\nonumber\\
	&= P_{\sigma\rebind vy}(\sigma\rebind vy\star)(y)\nonumber\\
	&= \begin{cases}
	\sigma\rebind vy(y), & \sigma\rebind vy(y)\in\Val\\
	\dsem{\sigma\rebind vy(y)}\sigma\rebind vy\star\alpha, & \sigma\rebind vy(y)\in\Lambda
	\end{cases}\nonumber\\
	&= \dsem{v}\sigma\rebind vy\star\alpha\label{eq:sigmastarA}\\
	&= \dsem{v}\sigma\star\alpha\label{eq:sigmastarB}\\
	&= \sigma\star\rebind{\dsem v\sigma\star\alpha}y(y),\nonumber
	\end{align}
	where the inference \eqref{eq:sigmastarA} is because $\sigma\rebind vy(y)=v$ and $v\in\Lambda$ and the
	inference \eqref{eq:sigmastarB} is by Lemma \ref{lem:garbage1}. For $x\ne y$, $x\in\dom \sigma$,
	\begin{align}
	\lefteqn{\sigma\rebind vy\star(x)}\nonumber\\
	&= P_{\sigma\rebind vy}(\sigma\rebind vy\star)(x)\nonumber\\
	&= \begin{cases}
	\sigma\rebind vy(x), & \sigma\rebind vy(x)\in\Val\\
	\dsem{\sigma\rebind vy(x)}\sigma\rebind vy\star\alpha, & \sigma\rebind vy(x)\in\Lambda
	\end{cases}\nonumber\\
	&= \begin{cases}
	\sigma(x), & \sigma(x)\in\Val\\
	\dsem{\sigma(x)}\sigma\rebind vy\star\alpha, & \sigma(x)\in\Lambda
	\end{cases}\label{eq:sigmastarC}\\
	&= \begin{cases}
	\sigma(x), & \sigma(x)\in\Val\\
	\dsem{\sigma(x)}\sigma\star\alpha, & \sigma(x)\in\Lambda
	\end{cases}\label{eq:sigmastarD}\\
	&= P_\sigma(\sigma\star)(x)\nonumber\\
	&= \sigma\star(x)\nonumber\\
	&= \sigma\star\rebind{\dsem v\sigma\star\alpha}y(x),\nonumber
	\end{align}
	where the inference \eqref{eq:sigmastarC} is because $\sigma\rebind vy(x)=\sigma(y)$ and the
	inference \eqref{eq:sigmastarD} is by Lemma \ref{lem:garbage1}.
\end{proof}

\section{Relating the Stochastic and Deterministic Semantics}
\label{sec:equivalence}

In Definition \ref{def:stochastic}, we have reworked the semantic function of \cite{BFKMPS18a} to be of type
\begin{align*}
\psem{-}:\Exp\to\Env\to\Toss\to\RV,
\end{align*}
and in Definition \ref{def:deterministic} we have given a deterministic semantics of type
\begin{align*}
\dsem{-}:\Exp\to\Env'\to\Bin\to\Val,
\end{align*}
where
\begin{itemize}
	\item
	$\RV$ are the \emph{random variables} $\Omega\to\Val$ from a sample space $\Omega$ taking values in a reflexive CPO $\Val$,
	\item
	$\Exp$ are the \emph{stochastic $\lambda$-terms} $M$,
	\item
	$\Env$ are the (\emph{stochastic}) \emph{environments} $E:\Var\to\RV$,
	\item
	$\Env'$ are the (\emph{deterministic}) \emph{environments} $e:\Var\to\Val$,
	\item
	$\mathsf{Toss}$ are the \emph{tossing processes} $T:\Omega\to\Bin$.
\end{itemize}

In this section we establish the formal relationship between these two semantics (Theorem \ref{thm:equivalence}). The idea is that in the stochastic semantics, although all data are parameterized by a sample point $\omega\in\Omega$, it is actually the same $\omega$ throughout a single run of the program. All independence requirements are satisfied by the way randomness is allocated to the different tasks. For example, in the clause for $\psem{MN}$, the coin sequence is broken into three disjoint sequences to use in three distinct tasks (evaluation of $M$, evaluation of $N$, and application of $M$ to $N$). This is equivalent to three independent tossing processes. In the clause for $\oplus$, we resolve the probabilistic choice using the head coin, but then throw it away and continue with the tail of the coin sequence, so the head coin is not reused. Because of these considerations, linearity is maintained.

\subsection{$\Fun$ and $\Lam$}
\label{sec:FunLam}

We first show how to define $\Fun$ and $\Lam$ with the desired properties from $\fun$ and $\lam$.
Recall that
\begin{center}
	\begin{tikzpicture}[->, >=stealth', node distance=25mm, auto]
	\small
	\node (W) {$\Val$};
	\node (E) [right of=W] {$[\Bin\to\Val\to\Val]$};
	\path ([yshift=2pt]W.east) edge node[above] {$\fun$} ([yshift=2pt]E.west);
	\path ([yshift=-2pt]E.west) edge node[below] {$\lam$} ([yshift=-2pt]W.east);
	\end{tikzpicture}
\end{center}
and we need
\begin{center}
	\begin{tikzpicture}[->, >=stealth', node distance=45mm, auto]
	\small
	\node (W) {$\Omega\to\Val$};
	\node (E) [right of=W] {$[(\Omega\to\Bin)\to(\Omega\to\Val)\to(\Omega\to\Val)]$.};
	\path ([yshift=2pt]W.east) edge node[above] {$\Fun$} ([yshift=2pt]E.west);
	\path ([yshift=-2pt]E.west) edge node[below] {$\Lam$} ([yshift=-2pt]W.east);
	\end{tikzpicture}
\end{center}
We define $\Fun$ and $\Lam$ in two steps:
\begin{align*}
\Fun &= {\Fun}_2\circ{\Fun}_1 & \Lam &= {\Lam}_1\circ{\Lam}_2,
\end{align*}
where 
\begin{center}
	\begin{tikzpicture}[->, >=stealth', node distance=35mm, auto]
	\small
	\node (W) {$\Omega\to\Val$};
	\node (E) [right of=W] {$[\Omega\to(\Bin\to\Val\to\Val)]$};
	\path ([yshift=2pt]W.east) edge node[above] {${\Fun}_1$} ([yshift=2pt]E.west);
	\path ([yshift=-2pt]E.west) edge node[below] {${\Lam}_1$} ([yshift=-2pt]W.east);
	\node (W2) [below of=W, node distance=12mm] {\phantom{$\Omega\to\Val$}};
	\node (E2) [right of=W2, node distance=45mm] {$[(\Omega\to\Bin)\to(\Omega\to\Val)\to(\Omega\to\Val)]$};
	\path ([yshift=2pt]W2.east) edge node[above] {${\Fun}_2$} ([yshift=2pt]E2.west);
	\path ([yshift=-2pt]E2.west) edge node[below] {${\Lam}_2$} ([yshift=-2pt]W2.east);
	\end{tikzpicture}
\end{center}
$\Fun_1$ and $\Lam_1$ are just the covariant hom-functor in $\Set$ applied to $\fun$ and $\lam$, respectively.
\begin{align*}
& {\Fun}_1 = \Set(\Omega,\fun) = \fun\circ\,{-}\\
& {\Lam}_1 = \Set(\Omega,\lam) = \lam\circ\,{-}
\end{align*}
Then
\begin{align*}
({\Fun}_1\circ{\Lam}_1)f &= ((\fun\circ\,{-})\circ(\lam\circ\,{-}))f\\
&= (\fun\circ\,{-})(\lam\circ f) = \fun\circ\lam\circ f = f,
\end{align*}
so $\Fun_1\circ\Lam_1=\id$. Also,
\begin{align}
{\Lam}_1(\lamb\omega f) &= (\lam\circ\,-)(\lamb\omega f)\nonumber\\
&= \lam\circ\lamb\omega f
= \lamb\omega{(\lam\circ\lamb\omega f)\omega}\nonumber\\
&= \lamb\omega{\lam((\lamb\omega f)\omega)}
= \lamb\omega{\lam(f)}.\label{eq:Lam8}
\end{align}

We define
\begin{align*}
{\Fun}_2 &= \lamb{fg}{S(Sfg)},
\end{align*}
where $S=\lamb{gh\omega}{g\omega(h\omega)}$ is the familiar $S$-combinator from combinatory logic. Then
\begin{align}
{\Fun}_2fgh\omega &= S(Sfg)h\omega = Sfg\omega(h\omega) = (f\omega)(g\omega)(h\omega).\label{eq:Fun2}
\end{align}
The function ${\Fun}_2$ is injective:
\begin{align*}
\lefteqn{{\Fun}_2f_1 = {\Fun}_2f_2}\\
&\Imp \lamb{g}{S(Sf_1g)} = \lamb{g}{S(Sf_2g)}\\
&\Imp \forall g\forall h\forall\omega\ (f_1\omega)(g\omega)(h\omega) = (f_2\omega)(g\omega)(h\omega)\\
&\Imp \forall y\forall z\forall\omega\ f_1\omega yz = f_2\omega yz\\
&\Imp f_1 = f_2.
\end{align*}
We define $\Lam_2$ to be the inverse of $\Fun_2$ on the image of $\Fun_2$. Thus
\begin{align}
{\Lam}_2(\lamb{gh\omega}{(f\omega)(g\omega)(h\omega)}) &= {\Lam}_2({\Fun}_2f) = f.\label{eq:Lam9}
\end{align}
Also,
\begin{align*}
\lefteqn{{\Fun}_2({\Lam}_2(\lamb{gh\omega}{(f\omega)(g\omega)(h\omega)}))}\\
&= {\Fun}_2f
= \lamb{gh\omega}{(f\omega)(g\omega)(h\omega)},
\end{align*}
so ${\Fun}_2\circ{\Lam}_2 = \id$ on its domain. Then
\begin{align*}
\Fun\circ\Lam &= {\Fun}_2\circ {\Fun}_1\circ{\Lam}_1\circ {\Lam}_2\\
&= {\Fun}_2\circ {\Lam}_2 = \id.
\end{align*}
Moreover, using \eqref{eq:Lam8}, \eqref{eq:Fun2}, and \eqref{eq:Lam9},
\begin{align}
\lefteqn{\Fun fgh\omega = {\Fun}_2({\Fun}_1f)gh\omega}\nonumber\\
&= {\Fun}_2(\fun\circ f)gh\omega\nonumber\\
&= (\fun\circ f)\omega(g\omega)(h\omega) = \fun(f\omega)(g\omega)(h\omega)\label{eq:Fun5}\\
\lefteqn{\Lam(\lamb{gh\omega}{f(g\omega){(h\omega)}})}\nonumber\\
&= {\Lam}_1({\Lam}_2(\lamb{gh\omega}{(\lamb\omega f)\omega(g\omega){(h\omega)}}))\nonumber\\
&= {\Lam}_1(\lamb\omega f)
= \lamb\omega{\lam(f)}.\label{eq:Lam5}
\end{align}
Note that the domain $\Omega\to\Val$ is not reflexive with respect to $[(\Omega\to\Bin)\to(\Omega\to\Val)\to(\Omega\to\Val)]$ under $\Fun$ and $\Lam$, but only with respect to $[\Omega\to(\Bin\to\Val\to\Val)]$ (and its image in $[(\Omega\to\Bin)\to(\Omega\to\Val)\to(\Omega\to\Val)]$ under ${\Fun}_2$). However this is all we need for Theorem \ref{thm:equivalence}.

\subsection{Relating the Deterministic and Stochastic Semantics}

The following theorem gives the formal relationship between the stochastic and deterministic denotational semantics.
\begin{theorem}
	\label{thm:equivalence}
	$\lamb\omega{\dsem Me{(T\omega)}} = \psem M{(\lamb{x\omega}{ex})}T$.
\end{theorem}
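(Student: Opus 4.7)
The plan is to prove the theorem by structural induction on $M$, but the abstraction case forces us to strengthen the statement so that $E$ ranges over arbitrary stochastic environments rather than only those of the form $\lamb{x\omega}{ex}$. Concretely, I would prove by induction on $M$ that for every $E:\Var\to\RV$ with $\FV(M)\subs\dom E$, every $T:\Omega\to\Bin$, and every $\omega\in\Omega$,
\[
\psem MET\omega \,=\, \dsem{M}{e_\omega}(T\omega), \qquad \text{where } e_\omega = \lamb x{E(x)\omega}.
\]
The theorem as stated is then the special case $E=\lamb{x\omega}{ex}$, since then $e_\omega=e$ uniformly in $\omega$.

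The variable case is immediate: both sides reduce to $E(x)\omega$. For $M=M_1\oplus M_2$, the required equation unfolds, using $(\tl\circ T)\omega=\tl(T\omega)$ and $\hd(T\omega)$, to an equation solved by applying the IH to $M_1$ and $M_2$ with tossing process $\tl\circ T$. For the application case $M=M_1M_2$, I would first use \eqref{eq:Fun5}, $\Fun fgh\omega=\fun(f\omega)(g\omega)(h\omega)$, to push evaluation at $\omega$ inside $\Fun$, and then apply the IH to $M_1$ and $M_2$ with tossing processes $\proj_0\circ T$ and $\proj_2\circ T$ (using $(\proj_i\circ T)\omega=\proj_i(T\omega)$). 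The result matches Definition \ref{def:deterministic}(ii).

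The main obstacle is the abstraction case $M=\lamb xM'$. Here
\[
\psem{\lamb xM'}ET \,=\, \Lam\bigl(\lamb{T'v}{\psem{M'}{E[v/x]}T'}\bigr),
\]
and the crucial point is that $E[v/x]$ is generally \emph{not} constant in $\omega$, even when $E$ is, because $v\in\RV$; this is precisely why the strengthening to arbitrary $E$ is indispensable. By the IH applied to $M'$ under $E[v/x]$, the argument of $\Lam$ equals
\[
\lamb{T'v\omega}{\dsem{M'}{e_\omega[v\omega/x]}(T'\omega)},
\]
which has the form $\lamb{gh\omega}{(f\omega)(g\omega)(h\omega)}$ required by $\Fun_2$, with $f\omega := \lamb{\beta u}{\dsem{M'}{e_\omega[u/x]}\beta}$. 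Because $f$ genuinely depends on $\omega$, \eqref{eq:Lam5} does not apply directly; however, the identical two-step derivation via $\Lam=\Lam_1\circ\Lam_2$, using \eqref{eq:Lam9} and the definition $\Lam_1=\lam\circ(-)$, yields the mild generalization
\[
\Lam\bigl(\lamb{gh\omega}{(f\omega)(g\omega)(h\omega)}\bigr) \,=\, \lamb\omega{\lam(f\omega)}.
\]
Evaluating at our $\omega$ gives $\lam(\lamb{\beta u}{\dsem{M'}{e_\omega[u/x]}\beta})$, which is exactly $\dsem{\lamb xM'}{e_\omega}(T\omega)$ by Definition \ref{def:deterministic}(iii), completing the $\lambda$-case and hence the induction.
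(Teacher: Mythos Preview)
Your proposal is correct and follows the same structural induction as the paper, but you are explicit about a point the paper leaves implicit. The paper proves the theorem as literally stated by induction; in its abstraction case, the step \eqref{eq:lamb4}$\to$\eqref{eq:lamb6} invokes ``the induction hypothesis for $M$,'' yet the stochastic environment appearing there, $\lamb{y\omega}{e[R\omega/x]y}$, is not of the form $\lamb{y\omega}{e'y}$ for any fixed deterministic $e'$, so the unstrengthened hypothesis does not literally apply---one needs precisely the locality property your strengthening to arbitrary $E$ asserts. A secondary difference is direction: the paper argues from the deterministic side, so the body of $\lam$ is $\omega$-independent and \eqref{eq:Lam5} applies verbatim, after which the rebinding identity \eqref{eq:bind5} does the remaining work; you argue from the stochastic side and instead need the easy generalization of \eqref{eq:Lam5} to $\omega$-dependent $f$, which, as you note, follows directly from $\Lam_2\circ\Fun_2=\id$ and $\Lam_1=\lam\circ(-)$.
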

\begin{proof}
  The proof follows by case analysis. A complete proof can be found in the Appendix which can be found in the complete version of this paper.

\end{proof}

\section{Operational Semantics}
\label{sec:opsem}

In this section we give big- and small-step operational rules in the style of \cite{Plotkin81} and prove their equivalence. The two styles use their coins in different patterns and the relationship must be formally specified. This is done using the tree processes of \S\ref{sec:trees}.

\subsection{Big-step rules}

The notation $\BOp Me\alpha vf$ means that $\cps Me$ reduces to $\cps vf$ under the big-step rules below with coins $\alpha\in\Bin$.
\begin{gather*}
\BOp xe\alpha{e(x)}e\\
\frac{\BOp Me{\alpha}vf}{\BOp{M\oplus N}e{0\alpha}vf}\qquad
\frac{\BOp Ne{\alpha}vf}{\BOp{M\oplus N}e{1\alpha}vf}\\[1ex]
\frac{\left\{\begin{array}c\BOp{M}e{\proj_0(\alpha)}{\lamb xK}{e_0}\\\BOp{N}{e_0}{\proj_1(\alpha)}{u}{e_1}\\\BOp {K\subst yx}{e_1\rebind uy}{\proj_2(\alpha)}vf\end{array}\right\}}{\BOp{MN}e\alpha vf}
\end{gather*}
where in the third premise of the last rule, $y$ is a fresh variable.

\subsection{Small-step rules}

The notation $\cps Me \red x \cps Nf$ means that $\cps Me$ reduces to $\cps Nf$ under the small-step rules below via a computation that consumes exactly coins $x\in 2^*$ in order from left to right.
The notation $\cps Me \red\alpha \cps Nf$ means that $\cps Me \red x \cps Nf$ for some $x\prec\alpha$, where $\alpha\in\Bin$.

\begin{gather*}
\frac{\cps Me \red x \cps{M'}f}{\cps{MN}e \red x \cps{M'N}f}
\qquad\frac{\cps Ne \red x \cps{N'}f}{\cps{vN}e \red x \cps{vN'}f}\\[1ex]
\cps Me \red\eps \cps Me
\qquad\cps xe \red\eps \cps{e(x)}e\\[1ex]
\cps{(\lamb xM)v}e \red\eps \cps{M\subst yx}{e \rebind vy} \quad \text{($y$ fresh)}\\[1ex]
\cps{M\oplus N}e \red 0 \cps Me 
\qquad\cps{M\oplus N}e \red 1 \cps Ne\\[1ex]
\frac{\cps Me \red x \cps Nf \quad \cps Nf \red y \cps Kg}{\cps Me \red{xy} \cps Kg}\\
\frac{\cps Me \red x \cps Nf\quad x\prec\alpha}{\cps Me \red\alpha \cps Nf}
\end{gather*}

\subsection{Relation of Big- and Small-Step Semantics}

The big- and small-step operational semantics use their coins in different patterns, and we need a way to characterize how they relate. The big-step rule for application breaks its coin sequence up into three independent coin sequences to evaluate the function, to evaluate the argument, and to apply the function, respectively;
whereas the small-step rules just use their coins sequentially.

The relationship is characterized by a \emph{tree process} as described in \S\ref{sec:tossing}.
The construction is given in the proof of the following theorem.

\begin{theorem}
	\label{thm:tree}
	For all $\cps Me$ there exists a tree process $T_{\cps Me}$ such that for all $\alpha$, $v$, $f$,
	\begin{align*}
	\BOp Me\alpha vf \Iff \cps Me \red{T_{\cps Me}(\alpha)} \cps vf.
	\end{align*}
\end{theorem}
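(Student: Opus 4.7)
The plan is to build $T_{\cps Me}$ as an explicit labeled tree $t_{\cps Me}:2^*\to\omega$ of the kind described in \S\ref{sec:trees}, whose output on input $\alpha$ rearranges the bits of $\alpha$ into the order in which the small-step semantics consumes them; the biconditional then splits cleanly into two inductive arguments. To preserve the ``no repetition along any path'' invariant, I would arrange that distinct phases of evaluation contribute labels from disjoint residue classes modulo a suitable integer.

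I would define $t_{\cps Me}$ by structural recursion on $M$, allowing coinductive unfolding through the sub-capsules that appear dynamically during evaluation. If $M$ is a variable or a $\lambda$-abstraction, neither semantics consumes coins, so one may take $t_{\cps Me}(y)=|y|$ (the identity labeling). If $M=M_0\oplus M_1$, the first bit consumed by both semantics is the choice coin, so set $t_{\cps{M_0\oplus M_1}e}(\eps)=0$ and $t_{\cps{M_0\oplus M_1}e}(b\cdot y)=1+t_{\cps{M_b}e}(y)$; the shift by $+1$ keeps label $0$ unique. For an application $M_0M_1$, split the tree into three phases with labels in residue classes $0,1,2\pmod 3$, matching the big-step partition into $\proj_0(\alpha),\proj_1(\alpha),\proj_2(\alpha)$. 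Phase~1 carries labels $3\cdot t_{\cps{M_0}e}(\gamma)$ along each output history $\gamma$ for which the small-step reduction of $\cps{M_0}e$ has not yet produced a value; at the history $\gamma_1$ where $\cps{M_0}e$ becomes $\cps{\lamb xK}{e_0}$ (with $e_0$ determined by $\gamma_1$), phase~2 begins with labels $3\cdot t_{\cps{M_1}{e_0}}(\gamma_2)+1$; similarly phase~3 uses $3\cdot t_{\cps{K\subst yx}{e_1\rebind uy}}(\gamma_3)+2$. After the three phases complete, each path is extended by injective padding (e.g.\ the least unused label). Within a phase, the no-repetition condition is inherited from the inductive hypothesis; across phases, residue classes modulo~$3$ are disjoint; so $t_{\cps Me}$ is a valid tree process.

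For the forward direction I would induct on the big-step derivation. The variable and value base cases are immediate, since $\eps$ is a prefix of every stream. In the $\oplus$ case, the inductive hypothesis yields $\cps{M_b}e\red{x}\cps vf$ with $x\prec T_{\cps{M_b}e}(\alpha)$, and the small-step $\oplus$ rule together with transitivity gives $\cps{M_0\oplus M_1}e\red{bx}\cps vf$, where $bx$ is a prefix of $T_{\cps{M_0\oplus M_1}e}(b\alpha)$. In the application case, the three premises give three small-step reductions which lift through the application-left, application-right, capsule-$\beta$, and transitivity rules to a single $\cps{M_0M_1}e\red{x_1x_2x_3}\cps vf$; by the phase-partitioned construction of $t_{\cps{M_0M_1}e}$, the concatenation $x_1x_2x_3$ is a prefix of $T_{\cps{M_0M_1}e}(\alpha)$. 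The backward direction runs in reverse: using the residue-class structure of $T_{\cps Me}$, one parses the consumed coin string into the sub-strings belonging to each phase and invokes the inductive hypothesis on each to reassemble the big-step premises.

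The hard part will be making the application case precise. The subtrees used in phases~2 and~3 depend on the output history through the dynamically generated $e_0$, $u$, and $e_1$, so the construction of $t_{\cps{M_0M_1}e}$ is genuinely coinductive and must be defined on \emph{every} history, including those along which $\cps{M_0}e$ diverges and phases~2--3 are never entered (then phase~1 occupies the full tree). Showing that the residue-class discipline matches the $\proj_0,\proj_1,\proj_2$ split of the big-step rule and yields the tree-process invariants of \S\ref{sec:trees} uniformly in all of these cases is the technical crux.
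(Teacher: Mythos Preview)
Your approach is essentially the same as the paper's: both construct $t_{\cps Me}$ by cases on $M$, use the residue-classes-modulo-$3$ trick for the three phases of application (with exactly your labels $3\cdot t_{\cps{M_0}e}(\cdot)$, $3\cdot t_{\cps{M_1}{e_0}}(\cdot)+1$, $3\cdot t_{\cps{K\subst yx}{e_1\rebind uy}}(\cdot)+2$), and prove the biconditional by induction on the big-step derivation. Your treatment is in places more explicit than the paper's---you spell out the root label and the $+1$ shift in the $\oplus$ case, and you flag the coinductive/divergence issue in the application case, which the paper handles silently by defining $t_{\cps{MN}e}(w)$ at every node via the three-way case split on whether $M$ (and then $N$) has yet reached a value along the history $w$.
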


\begin{proof}
 The rules for the big-step semantics define proof trees by which one concludes that
 an instance of the big-step relation holds. We proceed by induction on the
 structure of these proof trees. The base case corresponds to reading a
 variable from the environment: $\BOp xe\alpha{e(x)}e$. This case is immediate
 since we can just take the tree process to be the one that defines the
 identity function; note that the environment stores only values so there is no
 further reduction.

 For the case
 \[\frac{\BOp Me{\alpha}vf}{\BOp{M\oplus N}e{0\alpha}vf} \]
 we have, by induction, a tree process \(T_{\cps{M}e}\), call it $T'$ for
 short, such that
 \[ \cps Me \red{T'(\alpha)}\cps vf \]
 and analogously for the other branch of the choice.
 We can define the tree \(t_{\cps (M\oplus N)e}\) by
 \begin{align*}
  t_{\cps{M\oplus N}e}(\alpha) = &= \begin{cases}
   t_{\cps Me}(\alpha') \text{ if } \alpha = 0\alpha'\\
   t_{\cps Ne}(\alpha') \text{ if } \alpha = 1\alpha'
  \end{cases}
 \end{align*}
 It is a routine calculation to verify the result in this case.
 
	For the case $\cps{MN}e$, take the tree
	\begin{align*}
	t_{\cps{MN}e}(w) &= \begin{cases}
	3\cdot t_{\cps{K\subst yx}{e_1\rebind uy}}(z)+2,\\[2pt]
	\quad\parbox{5cm}{if $w=xyz\wedge\cps Me\red x\cps{\lamb xK}{e_0}$\\
		and $\cps N{e_0}\red y\cps u{e_1}$,}\\
	3\cdot t_{\cps N{e_0}}(y)+1,\\[2pt]
	\quad\parbox{5cm}{if $w=xy\wedge\cps Me\red x\cps{\lamb xK}{e_0}$\\
		and $\cps N{e_0}\red y\text{NV}$,}\\
	3\cdot t_{\cps Me}(w),\ \text{if $\cps Me\red w\text{NV}$},
	\end{cases}
	\end{align*}
	where NV means ``some capsule that is not reduced,''
	and let $T_{\cps{MN}e}$ be the associated tossing process.
	Then $\BOp{MN}e\alpha vf$ occurs iff there exist $K,u,e_0,e_1$, and $y$ fresh such that
	\begin{gather*}
	\BOp{M}e{\proj_0(\alpha)}{\lamb xK}{e_0}\qquad\BOp{N}{e_0}{\proj_1(\alpha)}{u}{e_1}\\
	\BOp {K\subst yx}{e_1\rebind uy}{\proj_2(\alpha)}vf
	\end{gather*}
	By the induction hypothesis, this occurs iff there exist $x,y,z$ such that
	\begin{align*}
	& \cps{M}e \red x \cps{\lamb xK}{e_0}
	&& x\prec T_{\cps Me}(\proj_0(\alpha))\\
	& \cps{N}{e_0} \red y \cps{u}{e_1}
	&& y\prec T_{\cps N{e_0}}(\proj_1(\alpha))\\
	& \cps{K\subst yx}{e_1\rebind uy} \red z \cps vf
	&& z\prec T_{\cps{K\subst yx}{e_1\rebind uy}}(\proj_2(\alpha))
	\end{align*}
	By construction of $t_{\cps{MN}e}(\alpha)$, $xyz \prec T_{\cps{MN}e}(\alpha)$, so this occurs iff
	\begin{align*}
	\cps{MN}e &\red{x} \cps{(\lamb xK)N}{e_0} \red{y} \cps{(\lamb xK)u}{e_1}\\
	&\red\eps \cps{K\subst yx}{e_1\rebind uy} \red{z} \cps vf,
	\end{align*}
	which occurs iff $\cps{MN}e \red {T_{\cps{MN}e}(\alpha)} \cps vf$.
\end{proof}

\section{Soundness and Adequacy}
\label{sec:adequacy}

%
%
%
%
%
%

The following theorem asserts the soundness and adequacy of our denotational semantics with respect to our big-step operational semantics.

\begin{theorem}\ 
	\label{thm:adequacy}
	\begin{enumerate}[{\upshape(i)}]
		\item
		If $\BOp M\sigma\alpha{\lamb xN}\tau$, then for any $\gamma$, $\dsem M\sigma\star\alpha = \dsem{\lamb xN}\tau\star\gamma = \lam\,(\lamb{\beta v}{\dsem N\tau\star\subst vx}\beta)$.
		\item
		If $\dsem M\sigma\star\alpha = \lam f$ for some $f:[\Bin\to\Val\to\Val]$, then $\converges M\sigma\alpha$.
	\end{enumerate}
\end{theorem}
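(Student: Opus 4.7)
The plan for part (i) is an induction on the derivation of $\BOp M\sigma\alpha{\lamb xN}\tau$. The variable case follows from the fixpoint equation \eqref{eq:estar} together with the observation that the denotation of a lambda abstraction is constant in its coin argument; the two probabilistic choice cases unfold Definition \ref{def:deterministic}(iv) and apply the IH to the selected branch. The real work is the application case: given premises $\BOp M\sigma{\proj_0\alpha}{\lamb xK}{\sigma_0}$, $\BOp N{\sigma_0}{\proj_1\alpha}{u}{\sigma_1}$, and $\BOp{K\subst yx}{\sigma_1\rebind uy}{\proj_2\alpha}{v}{\tau}$, the IH on the first premise combined with $\fun\circ\lam=\id$ reduces $\dsem{MN}\sigma\star\alpha$ to the denotation of $K$ in environment $\sigma_0\star$ with $x$ bound to $\dsem N{\sigma_0}\star(\proj_1\alpha)$; applying the IH for $N$ together with Lemma \ref{lem:sigmastar} identifies this with $\dsem{K\subst yx}{(\sigma_1\rebind uy)\star}(\proj_2\alpha)$, and a final IH closes the case.

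For part (ii), I would use the Kleene construction $\sigma\star=\sup_n P_\sigma^n(\bot)$ together with Scott-continuity of the semantic function to write $\dsem M\sigma\star\alpha=\sup_n \dsem M{P_\sigma^n(\bot)}\alpha$. Since $\dsem M\sigma\star\alpha=\lam f$ is non-$\bot$, some finite iterate $\dsem M{P_\sigma^n(\bot)}\alpha$ is already non-$\bot$; let $n$ be the least such index and proceed by lexicographic induction on the pair consisting of $n$ and the syntactic size of $M$. The cases for variables and lambda abstractions are immediate since they are values, and for a choice $M_1\oplus M_2$ the branch selected by $\hd\alpha$ inherits a denotation of the same form with a strictly smaller syntactic size.

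The main obstacle is the application case $M_1 M_2$, since the body $K$ of the function reached during reduction is not a syntactic subterm of $M$. I first argue that $\dsem{M_1}\sigma\star(\proj_0\alpha)$ is a $\lam$-value and that $\dsem{M_2}\sigma\star(\proj_2\alpha)$ is non-$\bot$. Strictness of the $\lam$-interpretation emphasized in Definition \ref{def:deterministic}(iii) is what guarantees the latter, and the former follows because $\fun$ extracts exactly the triples packaged by $\lam$. The IH on $M_1$ and $M_2$ (at weakly smaller approximation levels) yields their operational convergence; part (i) lifts these to denotational equalities; and substitution into the body produces a reduct whose denotation lives at a strictly smaller approximation level $n'<n$, since one level of recursive unfolding has been consumed. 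The lexicographic IH applied to the body then completes the proof.
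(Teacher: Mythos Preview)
Your plan for part (i) is essentially the paper's proof. One point glossed over: in the application case, the unfolding of $\dsem{MN}\sigma\star\alpha$ evaluates $N$ in $\sigma\star$, whereas the induction hypothesis for $N$ speaks about $\sigma_0\star$; the paper invokes Lemma~\ref{lem:garbage1} to bridge these (and similarly to pass from $\sigma_0\star$ to $\sigma_1\star$ for the body).

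For part (ii) the paper takes a more direct route: a case analysis on the shape of $M$ with no Kleene approximants. In the application case it shows $\dsem{M_1}\sigma\star(\proj_0\alpha)\ne\bot$ and (using the strictness highlighted in Definition~\ref{def:deterministic}(iii)) $\dsem{M_2}\sigma\star(\proj_2\alpha)\ne\bot$, applies the induction hypothesis to $M_1$ and $M_2$, then uses part (i) together with Lemma~\ref{lem:sigmastar} to rewrite the whole denotation as $\dsem{K\subst yx}(\sigma_1\rebind uy)\star(\proj_1\alpha)$ and invokes the induction hypothesis once more on the body. (The paper is itself silent on what well-founded measure licenses this last step; it is certainly not the syntactic size of $M$.)

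Your attempt to supply such a measure via the least Kleene index $n$ with $\dsem M{P_\sigma^n(\bot)}\alpha\ne\bot$ has a genuine gap: the claimed strict decrease $n'<n$ when passing to the body is not justified and can fail. The body runs in the \emph{extended} capsule environment $\sigma_1\rebind uy$, and the fresh binding $y\mapsto u$ itself costs a Kleene iteration, cancelling the one ``consumed'' by the function lookup. Concretely, take $\sigma(f)=\lamb x{x(\lamb w w)}$ and consider $f\,(\lamb z z)$: the least $n$ making $\dsem{f(\lamb zz)}{P_\sigma^n(\bot)}\alpha\ne\bot$ is $n=1$; the body is $y(\lamb w w)$ in $\sigma'=\sigma\rebind{\lamb z z}y$, and the least $n'$ making $\dsem{y(\lamb ww)}{P_{\sigma'}^{n'}(\bot)}\beta\ne\bot$ is also $n'=1$, while the two terms have the same syntactic size. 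So your lexicographic pair does not decrease.
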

\begin{proof}
	(i) The proof is by induction on the derivation of $\BOp M\sigma\alpha v\tau$.
	Let us do the easy cases first.
	For variables, we have $\BOp x\sigma\alpha{\sigma(x)}\sigma$ and
	\begin{align*}
	\dsem x\sigma\star\alpha = \sigma\star(x) = P_\sigma(\sigma\star)(x) = \dsem{\sigma(x)}\sigma\star\beta.
	\end{align*}
	
	For abstractions, we have $\BOp{\lamb xM}\sigma\alpha{\lamb xM}\sigma$ and
	\begin{align*}
	\dsem{\lamb xM}\sigma\star\alpha = \dsem{\lamb xM}\sigma\star\beta
	\end{align*}
	for any $\beta$, since the semantics of abstractions does not depend on $\alpha$.
	
	For choice, suppose $\BOp{M\oplus N}\sigma{\alpha}v\tau$. If $\hd\alpha = 0$, then $\BOp M\sigma{\tl\alpha}v\tau$. By the induction hypothesis, $\dsem M\sigma\star(\tl\alpha) = \dsem v\tau\star\beta$, so $\dsem{M\oplus N}\sigma\star(0\tl\alpha) = \dsem v\tau\star\beta$. By a similar argument, if $\hd\alpha = 1$, then $\dsem{M\oplus N}\sigma\star(1\tl\alpha) = \dsem v\tau\star\beta$. Thus in either case, $\dsem{M\oplus N}\sigma\star\alpha = \dsem v\tau\star\beta$.
	
	The most involved case is application. Suppose $\BOp{MN}\sigma{\alpha}v\tau$.
	Then for some $\lamb xK$, $\sigma_0$, $u$, and $\sigma_1$ such that $\sigma\sqle\sigma_0\sqle\sigma_1\sqle\tau$,
	\begin{gather*}
	\BOp{M}\sigma{\proj_0(\alpha)}{\lamb xK}{\sigma_0}
	\qquad\BOp{N}{\sigma_0}{\proj_2(\alpha)}{u}{\sigma_1}\\[1ex]
	\BOp {K\subst yx}{\sigma_1\rebind uy}{\proj_1(\alpha)}v\tau
	\end{gather*}
	where $y\in\Var$ is fresh.
	By the induction hypothesis,
	\begin{align}
	\dsem M\sigma\star(\proj_0(\alpha)) &= \dsem{\lamb xK}\sigma_0\star\beta\label{eq:IH1}\\
	\qquad \dsem N\sigma_0\star(\proj_2(\alpha)) &= \dsem u\sigma_1\star\beta\label{eq:IH2}\\
	\dsem{K\rebind yx}\sigma_1\rebind uy\star(\proj_1(\alpha)) &= \dsem v\tau\star\beta.\label{eq:IH3}
	\end{align}
	Then
	\begin{align}
	\lefteqn{\fun(\dsem M\sigma\star(\proj_0(\alpha)))}\nonumber\\
	&= \fun(\dsem{\lamb xK}\sigma_0\star\gamma) && \text{by \eqref{eq:IH1}}\nonumber\\
	&= \fun(\dsem{\lamb yK\rebind yx}\sigma_0\star\gamma),\ \text{$y$ fresh} && \text{by $\alpha$-conversion}\nonumber\\
	&= \fun(\lam(\lamb{\beta v}{\dsem{K\rebind yx}\sigma_0\star\rebind vy\beta}))\nonumber\\
	&= \lamb{\beta v}{\dsem{K\rebind yx}\sigma_0\star\rebind vy\beta}.\label{eq:K}
	\end{align}
	By \eqref{eq:IH2} and Lemma \ref{lem:garbage1}, since $\sigma_0$ is an extension of $\sigma$,
	\begin{align}
	\dsem N\sigma\star(\proj_2(\alpha)) = \dsem N\sigma_0\star(\proj_2(\alpha)) = \dsem u\sigma_1\star\beta.\label{eq:Nu}
	\end{align}
	By Lemma \ref{lem:sigmastar}, since $y$ is fresh,
	\begin{align}
	\sigma_1\rebind uy\star = \sigma_1\star\rebind{\dsem u\sigma_1\star\gamma}y.\label{eq:srebind}
	\end{align}
	Using \eqref{eq:IH3}--\eqref{eq:srebind} and Lemma \ref{lem:garbage1}, 
	\begin{align*}
	\lefteqn{\dsem{MN}\sigma\star\alpha}\\
	&= \fun(\dsem{M}\sigma\star(\proj_0(\alpha)))(\proj_1(\alpha))(\dsem{N}\sigma\star(\proj_2(\alpha)))\\
	&= (\lamb{\beta v}{\dsem{K\rebind yx}\sigma_0\rebind vy\beta})(\proj_1(\alpha))(\dsem u\sigma_1\star\gamma)\\
	&= \dsem{K\rebind yx}\sigma_0\star\rebind{\dsem u\sigma_1\star\gamma}y(\proj_1(\alpha))\\
	&= \dsem{K\rebind yx}\sigma_1\star\rebind{\dsem u\sigma_1\star\gamma}y(\proj_1(\alpha))\\
	&= \dsem{K\rebind yx}\sigma_1\rebind uy\star(\proj_1(\alpha))\\
	&= \dsem v\tau\star\beta.
	\end{align*}
	
	(ii) For variables, we have $\dsem x\sigma\star\alpha = \sigma\star(x) = \lam f$ for some $f:[\Bin\to\Val\to\Val]$. Since $\dom\sigma=\dom\sigma\star$, we must have $\sigma(x) = \lamb xK$ for some $\lamb xK\in\Lambda$. By definition of $\sigma\star$, $\sigma\star(x) = \dsem{\lamb xK}\sigma\star\alpha = \lam(\lamb{\beta v}{\dsem K\sigma\star\subst vx\beta})$. As $\lam$ is injective, $f=\lamb{\beta v}{\dsem K\sigma\star\subst vx\beta}$, and $\BOp x\sigma\alpha{\lamb yK}\sigma$.
	
	For $\lambda$-abstractions, $\dsem{\lamb xM}\sigma\star\alpha = \lam(\lamb{\beta v}{\dsem M\sigma\star\subst vx\beta})$ and $\BOp {\lamb xM}\sigma\alpha{\lamb xM}\sigma$.
	
	For choice, we have $\dsem{M\oplus N}\sigma\star\alpha = \hd\alpha\,?\,\dsem{M}\sigma\star(\tl\alpha) : \dsem{N}\sigma\star(\tl\alpha) = \lam f$. Either $\hd\alpha=1$, in which case $\dsem{M}\sigma\star(\tl\alpha) = \lam f$ and $\converges M\sigma{\tl\alpha}$ by the induction hypothesis, or $\hd\alpha=0$, in which case $\dsem{N}\sigma\star(\tl\alpha) = \lam f$ and $\converges N\sigma{\tl\alpha}$ by the induction hypothesis. In either case, $\converges{M\oplus N}\sigma\alpha$ by the big-step rule for choice.
	
	Finally, for applications, suppose $\dsem{MN}\sigma\star\alpha = \lam f$. We have
	\begin{align*}
	\dsem{MN}\sigma\star\alpha = \fun\,(\dsem{M}\sigma\star\prj 0\alpha)\,\prj 1\alpha\,(\dsem{N}\sigma\star\prj 2\alpha)
	\end{align*}
	If $\dsem{M}\sigma\star\prj 0\alpha = \bot$, then
	\begin{align*}
	\dsem{MN}\sigma\star\alpha
	&= \fun\,(\dsem{M}\sigma\star\prj 0\alpha)\,\prj 1\alpha\,(\dsem{N}\sigma\star\prj 2\alpha)\\
	&= \fun\,\bot\,\prj 1\alpha\,(\dsem{N}\sigma\star\prj 2\alpha)\\
	&= (\lamb{\beta v}\bot)\,\prj 1\alpha\,(\dsem{N}\sigma\star\prj 2\alpha)\\
	&= \bot,
	\end{align*}
	contradicting our assumption. Similarly, if $\dsem{M}\sigma\star\prj 0\alpha = \lam g$ but $\dsem{N}\sigma\star\prj 2\alpha = \bot$, then
	\begin{align*}
	\dsem{MN}\sigma\star\alpha
	&= \fun\,(\dsem{M}\sigma\star\prj 0\alpha)\,\prj 1\alpha\,(\dsem{N}\sigma\star\prj 2\alpha)\\
	&= \fun(\lam g)\,\prj 1\alpha\,\bot\\
	&= g\,\prj 1\alpha\,\bot\\
	&= \bot,
	\end{align*}
	again contradicting our assumption. So we can assume that $\dsem{M}\sigma\star\prj 0\alpha = \lam g$ and $\dsem{N}\sigma\star\prj 2\alpha \ne \bot$. By the induction hypothesis and Lemma \ref{lem:sigmastar},
	\begin{align*}
	& \BOp M\sigma{\alpha_0}{\lamb xK}{\sigma_0} && \BOp N{\sigma_0}{\alpha_2}{u}{\sigma_1}\\
	& g = \lamb{\beta v}{\dsem K\sigma_0\star\subst vx\beta} && \dsem N\sigma_0\star\alpha_2 = \dsem u\sigma_1\star(-).
	\end{align*}
	
	\begin{align*}
	\dsem{MN}\sigma\star\alpha
	&= \fun\,(\dsem{M}\sigma\star\prj 0\alpha)\,\prj 1\alpha\,(\dsem{N}\sigma_0\star\prj 2\alpha)\\
	&= \fun(\lam g)\,\prj 1\alpha\,(\dsem u\sigma_1\star(-))\\
	&= g\,\prj 1\alpha\,(\dsem u\sigma_1\star(-))\\
	&= (\lamb{\beta v}{\dsem K\sigma_0\star\subst vx\beta})\,\prj 1\alpha\,(\dsem u\sigma_1\star(-))\\
	&= \dsem K\sigma_0\star\subst{\dsem u\sigma_1\star(-)}x\prj 1\alpha\\
	&= \dsem{K\subst yx}\sigma_0\star\subst{\dsem u\sigma_1\star(-)}y\prj 1\alpha\\
	&= \dsem{K\subst yx}\sigma_1\star\subst{\dsem u\sigma_1\star(-)}y\prj 1\alpha\\
	&= \dsem{K\subst yx}\sigma_1\rebind uy\star\prj 1\alpha,
	\end{align*}
	and by the induction hypothesis, $\converges{K\subst yx}{\sigma_1\rebind uy}{\prj 1\alpha}$. By the big-step rule for application, $\converges{MN}\sigma\alpha$.
\end{proof}

\begin{corollary}
  For every capsule $\cps M \sigma$
  $$\{\alpha \in \Bin \, | \, \diverges M \sigma \alpha \} = \{\alpha \in \Bin \, | \, \dsem M \sigma\alpha = \bot \}$$

\end{corollary}

\section{Related Work and Concluding Remarks}
In probability theory, stochastic processes are modeled as random variables
(measurable functions) defined on a probability space, which is viewed as the
source of randomness. It is natural to think of probabilistic programming in a
similar vein, and in many of the related approaches one sees a programming
formalism augmented by a source of randomness.

The idea of modeling probabilistic programs with a stream of random data in the
$\lambda$-calculus has been used in \cite{borgstrom} as well. In that work, the
authors define big-step and small-step operational semantics for an idealized
version of the probabilistic language Church. Their operational semantics, like
ours, is a binary relation parameterized by a source of randomness. Although
their language can handle continuous distributions and soft conditioning, they
have not given a denotational semantics. Our approach could accommodate
continuous distributions simply by changing the source of randomness to
$\reals$. We might interpret $\reals$ either as the usual real numbers or as its
constructive version as in Real PCF \cite{realpcf}. To accommodate soft
conditioning, we could adopt the solution proposed in \cite{staton} of adding a
write-only state cell to store the weight of the execution trace, which can be
done by slightly changing our domain equation. These extensions would complicate
our semantics and its presentation, so we leave them for future work.

In a similar vein, the category of Quasi-Borel Spaces ($\mathbf{Qbs}$) defined
in \cite{qbs} also assumes that probability comes from an ambient source of
randomness, which they model as a set of random variables of type
$\mathbb{R} \to A$ satisfying certain properties. Furthermore, they show that in
$\mathbf{Qbs}$ there is a Giry-like monad that uses the set of random variables
in its definition. In \cite{wqbs}, to accommodate arbitrary recursion, they
equip every Quasi-Borel space with a complete partial order and require the set
of random variables to be closed with respect to directed suprema. It would be
interesting to better understand how our requirement of continuity of coin usage
relates to their construction.

There has also been alternative operational semantics for languages similar to
ours. In \cite{vignudelli}, an operational semantics is defined in terms of
Markov kernels over the values. Since the focus of that work is on syntactic
methods to reason about contextual equivalence, a denotational semantics is not
defined. However, by our adequacy and soundness theorems, we can also use our
semantics to reason about contextual equivalence. Furthermore, since the set
$\set{\alpha\in\Bin}{\converges M \sigma \alpha}$ is measurable, one can prove
by induction on reduction sequences that the semantics of \cite{vignudelli} and
ours are equivalent.

An alternative domain theoretical tool that has been used to interpret
randomness is the probabilistic powerdomain construction. Recently the Jung-Tix problem \cite{jungtix} has been solved \cite{jia2021}, showing that it is possible to define a commutative probabilistic monad in a cartesian closed category of continuous domains. We tackle the problem from a different perspective. We leave for future work to understand the connections between the probabilistic powerdomain and our functor $M X = \Binp \to X$. It is worth noting that this functor is the Reader monad from functional
programming. Unfortunately, if we were to use the same monad multiplication from
the Reader monad---i.e. $\mu_X (t) = \lambda \alpha. t \, \alpha \, \alpha$---we
would reuse the same source of randomness twice, breaking linearity of usage of
random data and probabilistic independence. Furthermore, for any other natural
transformation $M^2 \Rightarrow M$ that preserves such linearity conditions, the
associativity monad law holds only up to a measure-preserving function. As an
example, suppose that we chose the natural transformation
$\mu_X (t) = \lambda \alpha : \Bin. t \, \prj 0 \alpha \, \prj 1 \alpha $ as our
monad multiplication. In this case the associativity law becomes:
\begin{align*}
\lambda t : M^3(X) \, \alpha : \Bin.\, t \,\prj 0 {\prj 0 \alpha}\, \prj 0 {\prj 1 \alpha} \, \prj 1 \alpha = \\ \lambda t : M^3(X) \, \alpha : \Bin.\, t \, \prj 0 \alpha \, \prj 0 {\prj 1 \alpha} \, \prj 1 {\prj 1 \alpha}
\end{align*}
Obviously the equation above does not hold.

As a final example of a related formalism, we mention probabilistic coherence
spaces~\cite{Danos11,Ehrhard14}, which use the decomposition of the usual
function space into a linear function space and an exponential comonad.
In~\cite{Ehrhard14}, a fully abstract semantics is given for a probabilistic
extension to PCF. They model higher-order probability by using a generalization of transition matrices. Cones of
measures have also been used to construct a model of higher-order probabilistic
computation~\cite{Ehrhard17}. It is a fascinating question to understand
precisely the relationship between all these formalisms for higher-order
probabilistic computation.

To conclude, while other approaches to denotational semantics for higher-order probabilistic computation have been taken, no such construction is the obviously "correct" one. To clarify this matter, the connections between different approaches would need to be well understood. But this is a hard open problem and requires in-depth understanding of the various possible approaches and how they relate, and the field is not there yet. Even though the approaches mentioned above are interesting, we do not see that they have any compelling argument suggesting that they are the only "right" semantics for probabilistic higher-order computation. In this paper we contributed to the area by focusing on the Boolean-valued semantics of \cite{BFKMPS18a} and modified it to accommodate a call-by-value operational semantics which we proved it sound and adequate with respect to the modified denotational semantics, solving the main open problem from that work.


\section*{Acknowledgments}

Thanks to Giorgio Bacci, Fredrik Dahlqvist, Robert Furber and Arthur Azevedo de Amorim.
Special thanks to Dana Scott for many inspiring conversations.
Thanks to the Bellairs Research Institute of McGill University for providing a wonderful research environment.

This material is based upon work supported by a
grant from the National Science Foundation 
under grants 
No. AitF-1637532,
No. SaTC-1717581,
and
No. CCF-2008083.
Any opinions, findings, and conclusions or recommendations expressed
in this material are those of the authors and do not necessarily reflect
the views of the National Science Foundation.

Panangaden is funded by NSERC (Canada).


\bibliographystyle{IEEEtran}
\bibliography{mybib,prakash}
\appendix
\section{Appendix}

\begin{proof}[Proof of Lemma \ref{lem:nctp}]
Given any real $0<a\le 1$, we construct a dense open set $E_a$ such that $\Pr(E_a)=a$. Enumerate $x\in 2^*$ in order of increasing length, which enumerates the intervals $I_x = \set\alpha{x\prec\alpha}$ in order of decreasing size. Call an interval \emph{occupied} if it intersects $E_a$; initially, $E_a=\emptyset$ and all intervals are unoccupied.

Write the binary expansion of $a$ as $\sum_{i=0}^\infty 2^{-n_i}$, $n_0 < n_1 < \cdots$, using the form with trailing 1's when there are two representations. At stage $i$, let $I_x$ be the next unoccupied interval in the list. Set $E_a \defeq E_a\cup I_{xy}$, where $n_i=\len{xy}$. This is always possible, since the procedure maintains the invariant that the largest unoccupied interval is at least twice the size of the next occupying interval. This is true initially since the first occupying interval is of size at most $1/2$ and the first unoccupied interval is of size 1, and it is preserved in each step since populating an interval $I$ with an interval $I'$ at most half the size leaves an unoccupied subinterval of $I$ at least the same size $I'$, and the next occupying interval is at most half the size of $I'$. After each stage, remove all newly occupied intervals from the list and repeat. Every interval in the list eventually becomes occupied after all the intervals before it become occupied or are removed from the list.

This construction results in a dense open set $E_a$ such that $\Pr(E_a)=a$. The set $E_a$ is the union of countably many pairwise disjoint intervals $I_x$ for $x\in H_a$, where $H_a$ is some countable binary prefix code.

Now let $a_n$ be any sequence of reals $1/2<a_n<1$ such that $\prod_{n=0}^\infty a_n=1/2$, e.g.~$a_n=(2^{n+1}+1)/(2^{n+1}+2)$ or $(\binom{2n+3}2-1)/\binom{2n+3}2$, and let
\begin{align*}
A &= \prod_{n=0}^\infty H_{a_n} = \set{x_0x_1x_2\cdots}{x_n\in H_{a_n},\ n\ge 0}.
\end{align*}
The set $A$ is the intersection of a descending chain of dense open sets $A_m = \bigcup\,\set{I_x}{x\in\prod_{n=0}^m H_{a_n}}$, $m\geq 0$, with $\Pr(A_m)=\prod_{n=0}^m a_n$. Thus $A$ is a dense $G_\delta$ set with $\Pr(A)=1/2$.

We also claim that for all intervals $I$, $0<\Pr(A\cap I)<\Pr I$. Note that $1/2=\prod_{n=0}^\infty a_n<\prod_{n=1}^\infty a_n<1$, thus if $\Pr(A_0\cap I)=\eps$, then $\Pr(A\cap I)=\eps\prod_{n=1}^\infty a_n$, so $0<\eps/2\le\Pr(A\cap I)<\eps\le\Pr(I)$. Thus both $A$ and $\bar A=\Bin\setminus A$ are dense and of measure 1/2.

Since both $A$ and $\bar A$ are dense, there is no interval contained in either one of these sets. Moreover, no tossing process $T$ with $T^{-1}(\set\alpha{0\prec\alpha})=A$ can be made continuous by deleting a nullset $N$ from the domain, since then we must have $I\cap\bar N\subs A$ for some interval $I$, thus $I\cap\bar A\subs N$, contradicting the fact that $\Pr(\bar A\cap I)>0$ for all intervals $I$. Such a tossing process $T$ exists; set
\begin{align*}
T^{-1}(\set\alpha{0x\prec\alpha}) &= A\cap\set\alpha{\beta_a\lex\alpha\ltx\beta_b}\\ 
T^{-1}(\set\alpha{1x\prec\alpha}) &= \bar A\cap\set\alpha{\gamma_a\lex\alpha\ltx\gamma_b}
\end{align*}
for $a=.x000\cdots$ and $b=.x111\cdots$ in binary, where the $\beta_a,\gamma_a\in\Bin$ are chosen so that
\begin{align*}
\Pr(A\cap\set\alpha{\alpha\lex\beta_a}) &= \Pr(\bar A\cap\set\alpha{\alpha\ltx\gamma_a}) = a,
\end{align*}
and $\lex$ and $\ltx$ refer to lexicographic order on streams.
\end{proof}

\begin{proof}[Proof of Theorem \ref{thm:conttp}]
	If $T$ is a continuous partial or total tossing process, then $\set\alpha{x\prec T\alpha}$ is an open set, therefore can be written as a union of basic clopen sets $I_y$. Take $P_x$ be the set of $\prec$-minimal strings $y$ such that $I_y\subs\set\alpha{x\prec T\alpha}$. Intuitively, $P_x$ is the set of $\prec$-minimal prefixes of input streams that produce $x$ as a prefix of the output.
	This is a coding function, and $\Pr(\set\alpha{x\prec T\alpha})=2^{-\len x}$ by definition of tossing process. Moreover, if $T$ is total, then $x\mapsto P_x$ is exhaustive by uniform continuity.
	
	Conversely, every coding function satisfying (i)-(iii) gives rise to a continuous tossing process $T$ by defining $T\alpha$ to be the unique stream containing all prefixes $x$ such that $P_x\cap\set y{y\prec\alpha}\ne\emptyset$, or undefined if no such stream exists. This is a tossing process by Lemma \ref{lem:measurepreserving}.
\end{proof}

\begin{proof}[Proof of Lemma \ref{lem:Scottify2}]
	(i) By Lemma \ref{lem:Scottify1}(i), the inclusion map $\Bin\to\Binp$ is continuous, thus its composition with $f$ is.
	
	(ii) For any basic open set $\up x$ of $\D$,
	\begin{align*}
	& y\in(\lamb z{\bigsqcap_{z\prec\alpha}g(\alpha)})^{-1}(\up x)
	\Iff \bigsqcap_{y\prec\alpha}g(\alpha)\in \up x\\
	&\Iff x\sqle\bigsqcap_{y\prec\alpha}g(\alpha)
	\Iff \forall\alpha\ (y\prec\alpha \Imp x\sqle g(\alpha))\\
	&\Iff \forall\alpha\ (\alpha\in\up y \Imp \alpha\in g^{-1}(\up x))
	\Iff \up y \subs g^{-1}(\up x),
	\end{align*}
	and $\set y{\up y\subs g^{-1}(\up x)}$ is Scott-open by Lemma \ref{lem:Scottify1}(ii).
      \end{proof}

      \begin{proof}[Proof of Theorem \ref{thm:equivalence}]

	For variables $x$,
	\begin{align*}
	\lamb\omega{\dsem xe{(T\omega)}} = \lamb\omega{ex} = (\lamb{y\omega}{ey})x = \psem x{(\lamb{y\omega}{ey})}T.
	\end{align*}
	
	For choice,
	\begin{align}
	\lefteqn{\lamb\omega{\dsem{M\oplus N}e(T\omega)}}\nonumber\\
	&= \lamb\omega{\hd(T\omega)\,?\,\dsem{M}e(\tl(T\omega)) : \dsem{N}e(\tl(T\omega))}\label{eq:oplus2}\\
	&= \lamb\omega{\hd(T\omega)\,?\,\psem M(\lamb{x\omega}{ex})(\tl\circ T)\omega\nonumber\\
		&\qquad\qquad\qquad\ : \psem N(\lamb{x\omega}{ex})(\tl\circ T)\omega}\label{eq:oplus4}\\
	&= \psem{M\oplus N}(\lamb{x\omega}{ex})T.\label{eq:oplus6}
	\end{align}
	Step \eqref{eq:oplus2} is by definition of $\dsem-$ (Definition \ref{def:deterministic}).
	Step \eqref{eq:oplus4} is by the induction hypothesis for $M$ and $N$ and the fact that $\tl(T\omega) = (\tl\circ T)\omega$.
	Step \eqref{eq:oplus6} is by definition of $\psem-$ (Definition \ref{def:stochastic}).
	
	For application,
	\begin{align}
	\lefteqn{\lamb\omega{\dsem{MN}e(T\omega)}}\nonumber\\
	&= \lamb\omega{\fun(\dsem{M}e(\proj_0(T\omega)))(\dsem{N}e(\proj_1(T\omega)))(\proj_2(T\omega))}\label{eq:MN2}\\
	&= \lamb\omega{\fun(\psem{M}(\lamb{x\omega}{ex})(\proj_0\circ T)\omega)\nonumber\\
		&\qquad\quad(\psem{N}(\lamb{x\omega}{ex})(\proj_1\circ T)\omega)((\proj_2\circ T)\omega)}\label{eq:MN4}\\
	&= \lamb\omega{\Fun(\psem{M}(\lamb{x\omega}{ex})(\proj_0\circ T))\nonumber\\
		&\qquad\quad(\psem{N}(\lamb{x\omega}{ex})(\proj_1\circ T))(\proj_2\circ T)\omega}\label{eq:MN6}\\
	&= \Fun(\psem{M}(\lamb{x\omega}{ex})(\proj_0\circ T))\nonumber\\
	&\qquad\quad(\psem{N}(\lamb{x\omega}{ex})(\proj_1\circ T))(\proj_2\circ T)\label{eq:MN8}\\
	&= \psem{MN}T(\lamb{x\omega}{ex}).\label{eq:MN10}
	\end{align}
	Step \eqref{eq:MN2} is by definition of $\dsem-$.
	Step \eqref{eq:MN4} is by the induction hypothesis for $M$ and $N$ and the fact that $\proj_2(T\omega) = (\proj_2\circ T)\omega$.
	Step \eqref{eq:MN6} is by \eqref{eq:Fun5}.
	Step \eqref{eq:MN8} is by $\eta$-reduction.
	Step \eqref{eq:MN10} is by definition of $\psem-$.
	
	Finally, for $\lambda$-abstraction, we first need a property of rebinding:
	\begin{align}
	(\lamb{x\omega}{ex})[R/x]
	&= \lamb y{(y=x)\,?\,R : (\lamb{x\omega}{ex})y}\nonumber\\
	&= \lamb y{(y=x)\,?\,\lamb\omega{R\omega} : \lamb\omega{ey}}\nonumber\\
	&= \lamb{y\omega}{(y=x)\,?\,R\omega : ey}\nonumber\\
	&= \lamb{y\omega}{(e\rebind{R\omega}xy)}.\label{eq:bind5}
	\end{align}
	Then
	\begin{align}
	\lefteqn{\lamb\omega{\dsem{\lamb xM}e(T\omega)}}\nonumber\\
	&= \lamb\omega{\lam(\lamb{\beta v}{\dsem{M}(e\rebind vx)\beta})}\label{eq:lamb2}\\
	&= \Lam(\lamb{TR\omega}{\dsem M(e[R\omega/x]){(T\omega)}})\label{eq:lamb4}\\
	&= \Lam(\lamb{TR}{\psem{M}(\lamb{y\omega}{e[R\omega/x]y})T})\label{eq:lamb6}\\
	&= \Lam(\lamb{TR}{\psem{M}((\lamb{x\omega}{ex})[R/x])T})\label{eq:lamb8}\\
	&= \psem{\lamb xM}(\lamb{x\omega}{ex})T.\label{eq:lamb10}
	\end{align}
	Step \eqref{eq:lamb2} is by definition of $\dsem-$.
	Step \eqref{eq:lamb4} is by \eqref{eq:Lam5}.
	Step \eqref{eq:lamb6} is by the induction hypothesis for $M$.
	Step \eqref{eq:lamb8} is by \eqref{eq:bind5}.
	Step \eqref{eq:lamb10} is by definition of $\psem-$.

      \end{proof}

\end{document}